\DeclareMathOperator*{\argmax}{arg\,max}
\DeclareMathOperator*{\argmin}{arg\,min}
\newtheorem{theorem}{Theorem}
\newtheorem{definition}{Definition}
\newtheorem{lemma}{Lemma}
\newtheorem{assumption}{Assumption}
\def\blue{\color{black}}
\def\violet{\color{black}}
\def\BibTeX{{\rm B\kern-.05em{\sc i\kern-.025em b}\kern-.08em
    T\kern-.1667em\lower.7ex\hbox{E}\kern-.125emX}}
\colorlet{blue}{blue}
\begin{document}
\title{Computation and Communication Co-scheduling for Multi-Task Remote Inference\\ 
}

\author{Md Kamran Chowdhury Shisher,~\IEEEmembership{Member,~IEEE,}
        Adam Piaseczny,~\IEEEmembership{Student~Member,~IEEE,}\\
Yin~Sun,~\IEEEmembership{Senior~Member,~IEEE,} Christopher G. Brinton,~\IEEEmembership{Senior~Member,~IEEE}\IEEEcompsocitemizethanks{\IEEEcompsocthanksitem This paper was presented in part at IEEE INFOCOM, 2025 \cite{ShisherINFOCOM2025}. 

M.K.C. Shisher, A. Piaseczny, and C. Brinton are with the Elmore Family School of Electrical and Computer Engineering, Purdue University, West Lafayette, IN, 47907, USA (e-mail: mshisher@purdue.edu, apiasecz@purdue.edu, cgb@purdue.edu). 
        
        Y. Sun is with the Department of Electrical and Computer Engineering, Auburn University, Auburn, AL,
36849, USA (e-mail: yzs0078@auburn.edu). 
 
Y. Sun was supported in part by the National Science Foundation (NSF) under grant CNS-2239677. M. K. C. Shisher, A. Piaseczny, and C. Brinton were supported in part by the Office of Naval Research (ONR) under grants N00014-23-C-1016 and N00014-22-1-2305, and by NSF CPS-2313109.}
}

\maketitle
\begin{abstract}
In multi-task remote inference systems, an intelligent receiver (e.g., command center) performs multiple inference tasks (e.g., target detection) using data features received from several remote sources (e.g., edge {\violet devices}). Key challenges to facilitating timely inference in these systems arise from (i) limited computational power of the sources to produce features from their inputs, and (ii) limited communication resources of the channels to carry simultaneous feature transmissions to the receiver. We develop a novel computation and communication co-scheduling methodology which determines feature generation and transmission scheduling to minimize inference errors subject to these resource constraints. Specifically, we formulate the co-scheduling problem as a weakly-coupled Markov decision process with Age of Information (AoI)-based timeliness gauging the inference errors. To overcome its PSPACE-hard complexity, we analyze a Lagrangian relaxation of the problem, which yields gain indices assessing the improvement in inference error for each potential feature generation-transmission scheduling action. Based on this, we develop a reoptimized maximum gain first (MGF) policy. We show that this policy is asymptotically optimal for the original problem as the number of inference tasks and the available communication and computation resources increase, provided the ratio among them remains fixed. Experiments demonstrate that reoptimized MGF obtains significant improvements over baseline policies for varying numbers of tasks, channels, and sources.
\end{abstract}
\begin{IEEEkeywords}
Scheduling, resource allocation, age of information, multi-task inference, edge computing.
\end{IEEEkeywords}
\section{Introduction}
The simultaneous advances in machine learning and communication technologies have spurred demand for intelligent networked systems across many domains \cite{giordani2020toward, akter2022iomt}. These systems, whether for commercial or military purposes, often rely on timely information delivery to a remote receiver for conducting several concurrent decision-making and control tasks \cite{shishertimely}. For example, consider intelligence, surveillance, and reconnaissance (ISR) \cite{peterson2020persistent} objectives within military operations. A command center may employ signals transmitted from several dispersed military assets, e.g., unmanned aerial vehicles (UAVs), to simultaneously classify friendly versus hostile agents, track the positions of targets, and detect anomalous sensor data. Similarly, in intelligent transportation \cite{chellapandi2023federated}, near real-time prediction of road conditions, vehicle trajectories, and other tasks is crucial for traffic management and safety. Smart retail management system requires to infer current inventory and to classify customer reactions. 

As the number and complexity of learning tasks in such applications continues to rise, there are two salient challenges to facilitating timely multi-task remote inference (MTRI). First, \textit{there are limited wireless resources (e.g., orthogonal frequency channels) available for information transmission from sources to the receiver at the network edge}. The observed information at the sources, often edge devices, can be high dimensional and transmitting the high dimensional data to cloud server requires a significant amount of communication resources. Due to limited communication resources, the delivered information may end up being stale. Because of advancement of hardware at the edge devices, questions arise as to whether this receiver (e.g., server)-only processing is efficient. Edge device computing has a potential to reduce communication resources needed. Instead of sending high dimensional signal values, AI-powered edge devices may locally construct low-dimensional feature representations of their high-dimensional signal observations (e.g., video streams) to send in lieu of the raw measurements.
This may be developed, for example, by splitting the neural network for each task at a designated cut layer, and implementing the two parts at the edge device and receiver, respectively \cite{wu2023split}. However, this also leads to the second challenge: \textit{the sources, often edge devices, have heterogeneous on-board computational capabilities, limiting their ability to simultaneously construct multiple features required by different tasks}. For example, smart glasses may have low compute power, whereas high compute power can be installed on a vehicles. Hence, the feature computational ability of smart glasses is significantly lower compared to vehicles. 

Due to these resource limitations, the features at the receiver may not always reflect the freshest source information. It is thus critical to ascertain which tasks require feature updates most urgently at any given time, i.e., to determine where to focus available MTRI resources. \emph{Age of Information} (AoI), introduced in \cite{song1990performance, kaul2012real}, can provide a useful measure of information freshness of the receiver. Specifically, consider packets sent from a source to a receiver: if $U(t)$ is the generation time of the most recently received packet by time $t$, then the AoI at time $t$ is the difference between $t$ and $U(t)$. Recent works on remote inference \cite{ShisherMobihoc, shishertimely, shisher2023learning,  shisher2024AR} have shown that the inference errors for different tasks can be expressed as functions of AoI, and that surprisingly, these functions are not always monotonic. Additionally, AoI can be readily tracked in an MTRI system on a per-task basis, making it a promising metric for determining how to prioritize resource allocation. Motivated by this, we pose the following research question:
\begin{quote}
\textbf{\textit{How can we develop a computation and communication co-scheduling methodology for MTRI systems that leverages AoI indicators of timeliness to minimize the inference errors across tasks while adhering to network resource constraints?}}
\end{quote}

\subsection{Outline and Summary of Contributions}
\begin{itemize}[leftmargin=4mm]
    \item We formulate the MTRI policy optimization problem to minimize discounted infinite horizon inference errors subject to source feature computation and transmission constraints (Sec.~\ref{sec:system}\&\ref{ssec:formulation}). This optimization considers the dependency of the inference error on AoI measures for each task's features and their impact on the prediction results.
    
    \item We show how the co-scheduling problem can be modeled as a weakly-coupled Markov Decision Process (MDP) (Sec.~\ref{WeaklyMDP}). Weakly-coupled MDPs are extensions of restless bandits by allowing for multiple resource constraints. To overcome the associated PSPACE-hard complexity, we derive a Lagrangian relaxation of the original problem, and establish its optimal decision (Sec.~\ref{sec:relaxation}, Lemma \ref{lemma1}). Analyzing the dual problem allows us to obtain a \emph{gain index} for each task, which quantifies the reduction in inference error from scheduling it.
    
    \item Leveraging these gain indices, we propose a novel maximum gain first (MGF) policy (Algorithm \ref{alg:gain}) to solve the original problem, iteratively scheduling features/tasks with maximum gain until capacity is reached (Sec.~\ref{ssec:mgf}). The MGF policy is a special case of the re-optimized fluid (ROF) policy introduced in \cite{brown2023fluid} for general weakly coupled MDPs. We prove that in the MTRI problem, our MGF policy achieves asymptotic optimality at a rate of $O (\frac{1}{\sum_{m=1}^M \sqrt{rk_m}})$, where $rk_m$ is the number of inference tasks per source $m$ and $M$ is the total number of sources (Theorem \ref{theorem4}, Sec.~\ref{ssec:analysis}). Notably, this optimality gap is tighter than the $O (\frac{1}{\sqrt{\sum_{m=1}^M rk_m}})$ bound established in \cite{brown2023fluid}.
    Our scheduling results are applicable to any bounded penalty functions of AoI with multiple resource constraints. We also provide Algorithm \ref{alg:gain1} by reducing the number of optimization variables of Algorithm \ref{alg:gain}. 

    \item We conduct numerical experiments to demonstrate our policy on synthetic and real-world inference tasks (Sec.~\ref{sec:simulation}). For synthetic evaluations, we use three different types of inference error functions that are widely used in AoI literature \cite{Tripathi2019, SunNonlinear2019}. For real-world inference tasks, we consider remote robot car detection and vehicular inference tasks (image segmentation and traffic prediction). In the remote robot car detection experiment, $4$ sources and $5$ robot cars are used, where source 1 observes 2 of the cars, while the remaining 3 sources each monitor a single car. In the vehicular inference tasks, roadside sensors equipped with cameras are used as the MTRI sources. We find that MGF significantly outperforms baseline policies in terms of cumulative errors as the number of tasks, channels, and sources are varied. A widening margin is observed as the number of tasks increases, consistent with our optimality analysis.
\end{itemize}

\subsection{Related Works}
\label{ssec:related}
The concept of Age of Information (AoI) has attracted significant research efforts; see, e.g., \cite{shisher2021age,kaul2012real,sun2017update, yates2015lazy, KamKompellaEphremidesTIT, kadota2018optimizing, soleymani2016optimal, SunSPAWC2018, SunNonlinear2019, chen2021uncertainty, wang2022framework, YinUpdateInfocom, orneeTON2021, Tripathi2019, klugel2019aoi, bedewy2021optimal,hsu2018age, sun2019closed, Kadota2018, ornee2023whittle, pan2023sampling, sun2023age, sun2022age, OrneeMILCOM, ayan2023optimal} and a survey \cite{yates2021age}. Initially, research efforts were centered on analyzing and optimizing linear functions of AoI, as a performance metric of communication networks \cite{kaul2012real, sun2017update, yates2015lazy, KamKompellaEphremidesTIT, kadota2018optimizing}. Recently,  researchers have shifted their efforts towards optimizing the performance of real-time applications, such as remote estimation \cite{SunTIT2020, orneeTON2021, OrneeMILCOM, orneeTON2021, ornee2023whittle}, remote inference \cite{shisher2021age, ShisherMobihoc, shishertimely, shisher2025ICC}, and control systems \cite{klugel2019aoi, ayan2023optimal}, by leveraging AoI as a tool. Previous works \cite{shisher2021age, ShisherMobihoc, shishertimely} have demonstrated that the performance of remote inference systems depends on the AoI of the features they utilize; specifically, representing inference error as a function of AoI. In this paper, we consider the more challenging MTRI case with multiple information sources, an edge receiver, and multiple inference tasks for each source. Motivated by the prior work, we consider the dependency of inference error for each task on the AoI of features delivered to the receiver. Notably, the inference error function in our case can be monotonic or non-monotonic with AoI. This paper is also related to the field of signal-agnostic remote estimation. The prior studies \cite{SunNonlinear2019, orneeTON2021, ornee2023whittle, Ornee2021performance, SunTIT2020, klugel2019aoi, pan2023sampling, Tripathi2019} in signal-agnostic remote estimation focused on Gaussian and Markovian processes and found that the estimation error can be represented as a function of AoI values.


Researchers have explored scheduling policies to minimize linear and non-linear functions of AoI in multi-source networked intelligent systems \cite{Tripathi2019, kadota2018optimizing, hsu2018age, ornee2023whittle, sun2019closed, Kadota2018, chen2021uncertainty,chen2023index, shisher2023learning, shishertimely, OrneeMILCOM}. Early studies focused on systems with limited communication resources and binary actions for each source \cite{Tripathi2019, kadota2018optimizing, hsu2018age, ornee2023whittle, sun2019closed, Kadota2018, chen2021uncertainty,chen2023index, OrneeMILCOM}. More recent research has expanded to consider scenarios with multiple actions per source \cite{shisher2023learning, shishertimely}. These scheduling problems have been formulated as restless multi-armed bandit (RMAB) problems, with either binary or multiple actions. While RMABs are weakly coupled MDPs, which are in general PSPACE-hard, Whittle index \cite{Tripathi2019, kadota2018optimizing, hsu2018age, ornee2023whittle, sun2019closed, Kadota2018, chen2021uncertainty, shishertimely} and gain index \cite{chen2023index, OrneeMILCOM, shisher2023learning} approaches have been shown to yield asymptotically optimal policies under certain conditions, notably the global attractor condition \cite{whittle1988restless, shishertimely, gast2021lp}. However, these previous works have not addressed the presence of computation resource constraints and multiple inference tasks characteristic of MTRI systems. By considering these factors, our MTRI computation and communication co-scheduling problem becomes a weakly coupled MDP that is more general than RMAB and requires new approaches to solve it.

\begin{figure*}[h]
  \centering
  
  \begin{subfigure}[t]{0.42\textwidth}
\includegraphics[width=\textwidth]{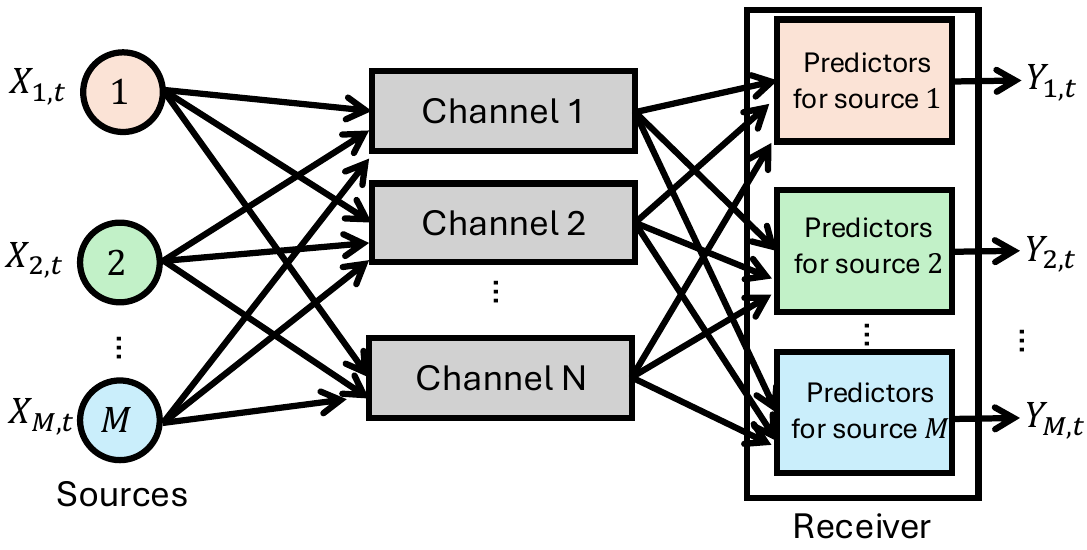}
  \subcaption{An MTRI system.}
\end{subfigure}
%
\hspace{10mm} 
\begin{subfigure}[t]{0.45\textwidth}
\includegraphics[width=\textwidth]{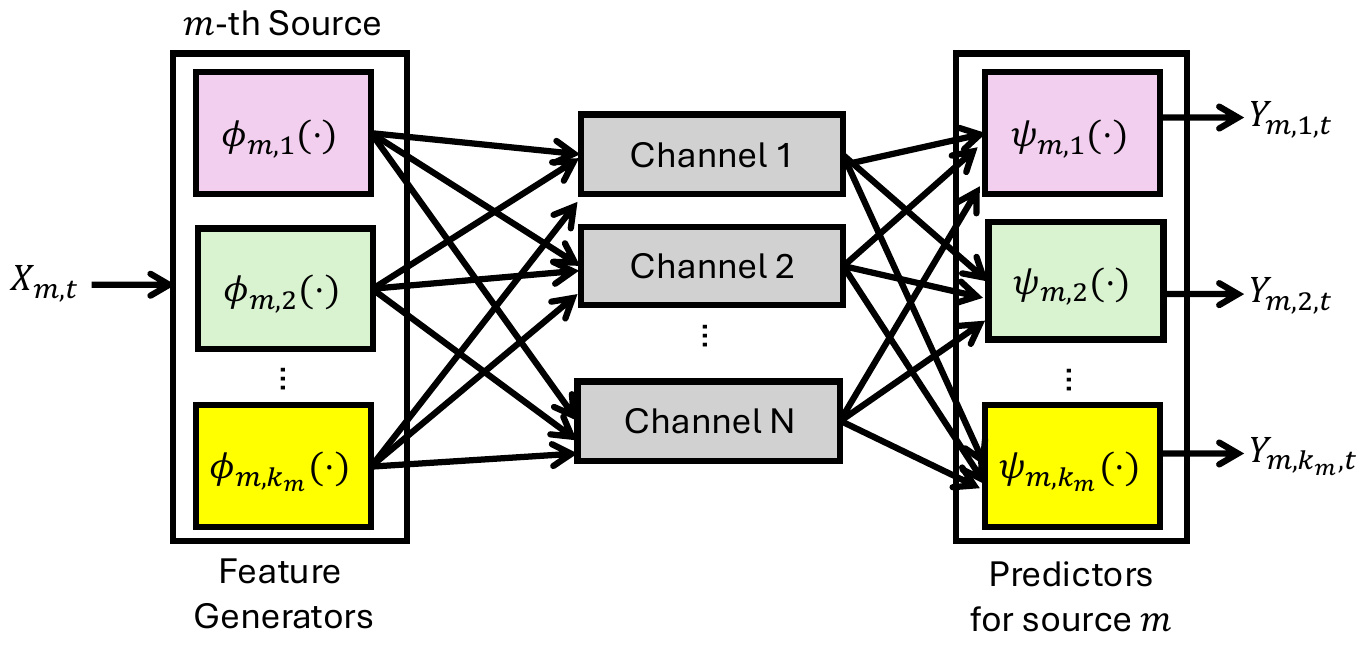}
\subcaption{\small $m$-th source with $k_m$ inference tasks.}
\end{subfigure}
\vspace{-0.05in}
\caption{\violet \small System Model: (a) The multi-task remote inference (MTRI) system, where $M$ sources are connected to an intelligent receiver via $N$ wireless channels. Each source $m$ observes the high-dimensional signal $X_{m,t}$. At each time slot $t$, the goal of the receiver is to predict time-varying targets $(Y_{1, t}, Y_{2, t}, \ldots, Y_{M,t})$, where the target $Y_{m,t}$ is a tuple of $k_m$ time-varying signals $(Y_{m, 1, t}, Y_{m, 2, t}, \ldots, Y_{m, k_m, t})$ of source $m$. (b) To address the $k_m$ inference tasks, each source $m$ generates and sends $k_m$ features, each associated with a single inference task. The receiver uses $k_m$ predictors to perform the $k_m$ inference tasks of source $m$, where $j$-th predictor for source $m$ predicts the target $Y_{m, j, t}$ using the most recently delivered feature from the $j$-th feature generator of source $m$.\label{fig:systemmodel}}
\vspace{-0.2in}
\end{figure*}

Recently, a few works \cite{brown2023fluid, gast2024reoptimization} 
have developed re-optimized fluid policies which are asymptotically optimal for general weakly-coupled MDPs, using linear programming solutions.
Our work builds upon the approach provided in \cite{brown2023fluid} to develop scheduling policies for MTRI systems with multiple sources, channels, and inference tasks, which we also show are asymptotically optimal. Importantly, the optimality gap obtained in our paper is tighter than the bound established in \cite{brown2023fluid}. Beyond minimizing inference errors, our gain indicies-based policy is more generally applicable to the minimization of any bounded penalty function of AoI which involves multiple actions per source/task and multiple resource constraints.

\section{System Model}
\label{sec:system}


\subsection{Overview}
{\violet We consider the multi-task remote inference (MTRI) system, as illustrated in Fig.~\ref{fig:systemmodel}(a), where $M$ sources are connected to an intelligent receiver via $N$ wireless channels and the receiver. Each source $m$ can be equipped with one sensor or multiple sensors addressing multiple tasks as depicted in Fig.~\ref{fig:systemmodel}(b); for example, in an ISR system, an UAV equipped with a camera can act as a source, transmitting processed video frames to a central command center for multiple tasks such as object recognition and anomaly detection. 
Another example is a robot using a LIDAR sensor for 3D mapping and a chemical sensor for hazardous material detection. At each time slot $t$, each source $m$ observes a high dimensional time-varying signal $X_{m,t} \in \mathcal{X}_m$, where $\mathcal{X}_m$ represents the set of possible observations, e.g., possible values of video frames captured by a camera. Sources will progressively generate low-dimensional feature representations of their observations for communication-efficient transmission over the $N$ wireless channels when they are scheduled.}



At each time $t$, the receiver employs multiple predictors trained to infer targets based on received source features. Specifically, for each source $m$, the receiver aims to infer 
$Y_{m, t}$ which is a tuple of $k_m$ time-varying targets $(Y_{m, 1, t}, Y_{m, 2, t}, \ldots, Y_{m, k_m, t})$ of source $m$. These targets can represent various inference tasks, e.g., object detection, segmentation, anomaly detection, 3D mapping, depending on the nature of the observations and {\blue the goals of the MTRI system}. In the system, there are a total of $K=\sum_{m=1}^Mk_m$ inference tasks. The tuple $(m, j)$ uniquely identifies the $j$-th inference task of source $m$.



\subsection{Computation Model}
Each source $m$ is equipped with $k_m$ {\blue pre-trained} feature generators. The $j$-th feature generator of source $m$, designed for the $(m, j)$-th inference task, is denoted by a function $\phi_{m,j}:\mathcal X_m \mapsto \mathcal Z_{m, j}$. This function takes the observation $X_{m,t} \in \mathcal X_{m}$ as input and generates a feature $\phi_{m,j}(X_{m,t}) \in \mathcal Z_{m, j}$, where $\mathcal Z_{m, j}$ is the set of possible features generated by $\phi_{m,j}(\cdot)$. 
To account for computational resource limitations, we assume it is not feasible to activate all feature generators at every time slot. Specifically, for source $m$, at most $C_m$ feature generators can be activated at any given time. 

\subsection{Communication Model}
As illustrated in Fig.~\ref{fig:systemmodel}(a), $N$ wireless channels are shared among the $M$ sources. If scheduled at time $t$, the $(m, j)$-th feature generator produces $\phi_{m,j}(X_{m,t})$ and transmits to the receiver using $n_{m, j}$ channels. For simplicity, we assume perfect channels, i.e., features sent at time slot $t$ are delivered error-free at time slot $t+1$. However, our results can be extended to accommodate erasure channels, where data loss may occur.

Due to the limited number of channels, at any given time $t$, only features for a subset of inference tasks can be transmitted. Consequently, the receiver may not have fresh features for all tasks. If the most recently delivered feature for the $(m, j)$-th inference task was generated $\Delta_{m, j}(t)$ time slots ago, then the feature at the receiver is $\phi_{m,j}(X_{m, t-\Delta_{m, j}(t)}),$
where $\Delta_{m, j}(t)$ is its age of information (AoI) \cite{kaul2012real, shishertimely}. Let $U_{m, j}(t)$ be the generation time of the most recent delivered feature. Then, the AoI can be formally defined as:
\begin{align}\label{AOI}
\Delta_{m, j}(t):=t-U_{m, j}(t),
\end{align}
which is the difference between the current time $t$ and the generation time $U_{m, j}(t)$.

\subsection{Inference Model}
The receiver is equipped with $K$ {\blue pre-trained} predictors, where $\psi_{m, j}:\mathcal Z_{m, j} \times \mathbb Z^{+} \mapsto \mathcal Y_{m, j}$ is the predictor function for the $(m,j)$-th inference task. Specifically, predictor $\psi_{m, j}(\cdot, \cdot)$ takes the most recently delivered feature $\phi_{m,j}(X_{m, t-\Delta_{m, j}(t)}) \in \mathcal Z_{m, j}$ and its AoI $\Delta_{m, j}(t) \in \mathbb Z^{+}$ as inputs and generates the predicted result $\hat Y_{m, j, t} \in \mathcal Y_{m, j}$. In other words, we assume the predictor may in general adjust/calibrate the inference based on the AoI. 

We make the following assumptions:
\begin{assumption}\label{independence}
The processes $\{(Y_{m, j, t}, X_{m,t}), t = 0, 1, \ldots\}$ and $\{\Delta_{m, j}(t), t=0, 1, \ldots\}$ are independent for all $(m, j)$.
\end{assumption}

\begin{assumption}\label{stationary}
    The process $\{(Y_{m, j, t}, X_{m,t}), t = 0, 1, \ldots\}$ is stationary for all $(m, j)$, i.e., the joint distribution of $(Y_{m, j, t}, X_{m,t-k})$ does not change over time $t$ for all $k\geq 0$. 
\end{assumption}


{\blue Assumption \ref{independence} is satisfied for signal-agnostic scheduling policies in which the scheduling decisions are made based on AoI and the distribution of the process, but not on the values taken by the process \cite{shishertimely}. Assumption \ref{stationary} is utilized to ensure that the inference error is a time-invariant function of the AoI, as we will see in \eqref{instantaneous_err1}. It is practical to approximate time-varying functions as time-invariant functions in the scheduler design. Moreover, the scheduling policy developed for time-invariant AoI functions serves as a valuable foundation for studying time-varying AoI functions \cite{tripathi2021online}.}


Under Assumptions \ref{independence}-\ref{stationary}, given an AoI $\Delta_{m, j}(t)=\delta$, the inference error for the $(m, j)$-th inference task at time slot $t$ can be represented as a function of AoI $\delta$ \cite{shishertimely, ShisherMobihoc}:
\begin{align}\label{instantaneous_err1} 
&p_{m, j}(\delta) \nonumber \\ &=\mathbb E_{Y, X \sim P_{Y_{m, j, t}, X_{m, t-\delta}}}\bigg[L_{m, j}(Y,\psi_{m, j}(\phi_{m, j}(X), \delta))\bigg],
\end{align}
where $P_{Y_{m, j, t}, X_{m, t-\delta}}$ is the joint distribution of the target $Y_{m, j, t}$ and the observation $X_{m, t-\delta}$, and $L_{m, j}(y, \hat y)$ is the loss function for the task that measures the loss incurred when the actual target is $y$ and the inference result is $\hat y$ (e.g., cross-entropy loss for a classification task). 

\section{Scheduling Problem Formulation}\label{scheduling}

\subsection{Scheduling Policy and Optimization}
\label{ssec:formulation}
We denote the scheduling policy as
$$\pi = (\pi_{m,j}(0), \pi_{m,j}(1), \ldots)_{\forall (m, j)},$$
where $\pi_{m,j}(t) \in \{0, 1\}$. At time slot $t$, if $\pi_{m,j}(t) = 1$, the features for the $(m,j)$-th inference task are generated and transmitted to the receiver; otherwise, if $\pi_{m,j}(t) = 0$, this generation and transmission does not occur. 
We let $\Pi$ denote the set of all signal-agnostic and causal scheduling policies $\pi$ that satisfy three conditions: (i) the scheduler knows the AoIs up to the present time, i.e., $\{\Delta_{m, j}(k)\}_{\forall m, j, k\leq t}$, (ii) the scheduler does not know signal values $\{X_{m, t}, Y_{m, j, t}\}_{\forall m, j, t}$, and (iii) the scheduler has access to the inference error functions $p_{m, j}(\delta)$ for all $(m, j)$. 

Under any scheduling policy $\pi$, the AoI $\Delta_{m, j}(t)$ for each inference task $(m, j)$ evolves according to:  
\begin{align}\label{AoIprocess}
\!\! \Delta_{m,j}(t+1)=
  \begin{cases}
        1, &\text{if}~\pi_{m, j}(t)=1 \\
        \Delta_{m,j}(t)+1, &\text{otherwise}.
    \end{cases}
    \end{align} 
We assume that the initial AoI of each task $(m, j)$ is a finite constant, e.g., $\Delta_{m,j}(0)=1$. 


Our goal is to find a policy $\pi \in \Pi$ that minimizes the infinite horizon discounted sum of inference errors over the $K$ tasks: 
\begin{align}\label{Multi-scheduling_problem}
\bar p_{opt} &=\inf_{\pi \in \Pi}\sum_{t=0}^{\infty}\frac{\gamma^t }{K}\sum_{m=1}^M \sum_{j=1}^{k_m} \mathbb{E}_{\pi} \left[ w_{m, j} p_{m, j}(\Delta_{m, j}(t))\right], \\\label{Sceduling_constraint1}
&\quad ~~\mathrm{s.t.} {\sum_{j=1}^{k_m} \pi_{m, j}(t) \leq C_m,} t=0, 1,\ldots, m=1, \ldots, M,
\\\label{Sceduling_constraint2}
&\quad \quad \quad \sum_{m=1}^M \sum_{j=1}^{k_m} \pi_{m, j}(t)n_{m, j} \leq N, ~~t=0, 1, 2, \ldots,
\end{align}
where $w_{m,j}\geq 0$ is the weight (e.g., priority) associated with the $(m, j)$-th inference task, and the discount factor $0 < \gamma < 1$ quantifies the diminishing importance of an inference task over time. At most $C_m$ feature generators for source $m$ can compute features at time $t$. Transmitting features for the $(m, j)$-th inference task requires $n_{m, j}$ of the $N$ wireless channels available. For each task, its inference error, $p_{m, j}(\Delta_{m, j}(t))$, depends on its AoI $\Delta_{m, j}(t)$ at time slot $t$, indicating the freshness of the feature used for inference.

\subsection{Weakly Coupled MDP Formulation}\label{WeaklyMDP}
The problem \eqref{Multi-scheduling_problem}-\eqref{Sceduling_constraint2} is a weakly coupled Markov decision process (MDP) \cite{brown2023fluid, gast2024reoptimization, nadarajah2024self} with $K$ sub-MDPs (referred to as \textit{arms} in the bandit literature), one per inference task $(m, j)$ across sources $m$. The state of each $(m, j)$-th MDP at each time $t$ is represented by the AoI $\Delta_{m, j}(t)$. 
The action is $\pi_{m, j}(t)$, and its per-timeslot cost is $\gamma^t  p_{m, j}(\Delta_{m, j}(t))$ with discount factor $0 < \gamma < 1$. We can see that the state evolution defined in \eqref{AoIprocess} and the cost of each MDP depends only on its current state and action. However, the actions for all MDPs $(\pi_{m, j}(t))_{\forall m, j}$ need to satisfy the constraints in \eqref{Sceduling_constraint1}-\eqref{Sceduling_constraint2}. This interdependence of actions across MDPs through multiple resource constraints, despite independent state transitions and costs, makes the overall problem \eqref{Multi-scheduling_problem}-\eqref{Sceduling_constraint2} a weakly coupled MDP \cite{brown2023fluid, gast2024reoptimization, nadarajah2024self}. Weakly coupled MDPs are PSPACE-hard because the number of states and actions grow exponentially with the number of sub-MDPs.

The restless multi-armed bandit (RMAB) problem is a special case of the weakly coupled MDP in \eqref{Multi-scheduling_problem}-\eqref{Sceduling_constraint2}. RMAB considers a single resource constraint, whereas our problem involves multiple resource constraints \eqref{Sceduling_constraint1}-\eqref{Sceduling_constraint2}. The PSPACE-hard complexity of RMABs can be overcome by using Whittle indices \cite{whittle1988restless, ShisherMobihoc, shishertimely, ornee2023whittle, Tripathi2019, kadota2018scheduling, chen2021uncertainty}, gain indices \cite{shisher2023learning, OrneeMILCOM, chen2023index}, and linear programming-based indices \cite{chen2021scheduling} to construct asymptotically optimal policies, provided indexability and/or global attractor conditions are satisfied. However, these RMAB policies cannot be directly applied to our more general problem due to the presence of multiple resource constraints, which requires us to develop a new solution approach.

\noindent \textbf{Solution approach.} In Sec.~\ref{sec:relaxation}\&\ref{sec:policy}, we follow the approach depicted in Fig. \ref{fig:flowchart} to solve \eqref{Multi-scheduling_problem}-\eqref{Sceduling_constraint2}. We begin by deriving a relaxed Lagrangian problem. We then utilize the resulting solution to construct a maximum gain first (MGF) policy (Algorithm \ref{alg:gain}) for \eqref{Multi-scheduling_problem}-\eqref{Sceduling_constraint2}. Theorem \ref{theorem4} will demonstrate that the MGF policy becomes asymptotically optimal as the number of inference task $k_m$ for each source $m$ increases.

\begin{figure}[t]
\centering
\includegraphics[width=0.50\textwidth]{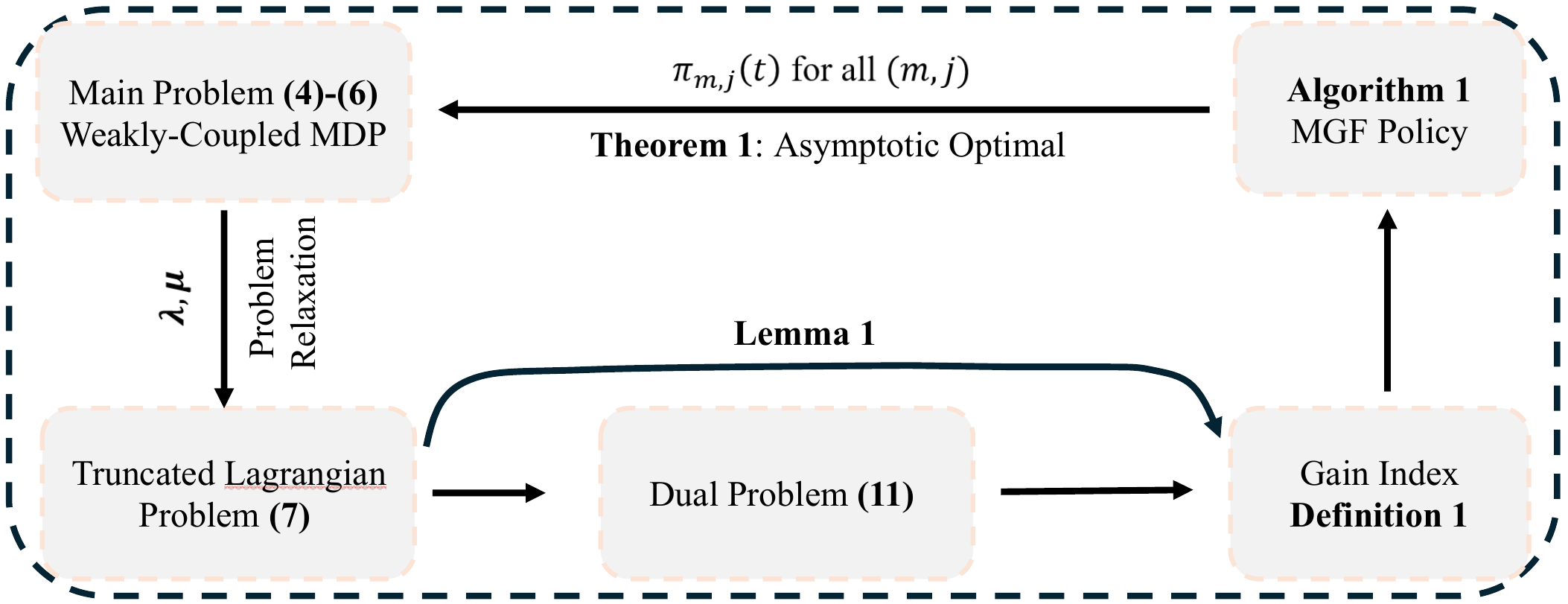}
\caption{\small Overview of our process to design the scheduling policy. 
\label{fig:flowchart}
}
\end{figure}



\section{Problem Relaxation}
\label{sec:relaxation}

\subsection{Lagrangian Relaxation and Dual Problem}
To develop an asymptotically optimal policy for \eqref{Multi-scheduling_problem}-\eqref{Sceduling_constraint2}, following recent techniques for weakly coupled MDPs \cite{nadarajah2024self, brown2023fluid}, we first relax the problem using Lagrange multipliers. We associate a vector of non-negative Lagrange multipliers $\boldsymbol{\lambda}_{t}=(\lambda_{1,t},\lambda_{2,t}, \ldots, \lambda_{M,t})$ with constraints \eqref{Sceduling_constraint1} and a non-negative Lagrange multiplier $\mu_t$ with constraint \eqref{Sceduling_constraint2} at each time $t$.  
To avoid an infinite number of Lagrange multipliers associated with the constraints over the infinite time horizon from $t=0$ to $t=\infty$, we truncate the problem to a finite time horizon $T$ as shown in \eqref{dual}.  Due to bounded inference error function, the performance loss resulting from this truncation becomes negligible for sufficiently large values of $T$.

{\violet The truncated problem from any time $\tau\in \{0, 1, \ldots, T\}$ to $T$ is given by
\begin{align}\label{dual}
&\bar p_{\tau}(\boldsymbol{\lambda}_{\tau:T}, \boldsymbol{\mu}_{\tau:T}) =\nonumber\\
&\;\inf_{\pi \in \Pi}\sum_{t=\tau}^{T}\frac{\gamma^{t-\tau}}{K}\sum_{m=1}^M  \sum_{j=1}^{k_m}   \mathbb{E}_{\pi}\left[  w_{m,j} p_{m, j}(\Delta_{m, j}(t))\right]\nonumber\\
&+\sum_{t=\tau}^{T} \sum_{m=1}^M {\lambda_{m, t}}\frac{\gamma^{t-\tau}}{K}  \left(\left(\sum_{j=1}^{k_m} \pi_{m, j}(t)\right)-C_m\right)\nonumber\\
&+\sum_{t=\tau}^{T} {\mu_t}\frac{\gamma^{t-\tau}}{K}  \left(\left(\sum_{m=1}^M  \sum_{j=1}^{k_m}\pi_{m, j}(t)n_{m, j}\right)- N\right),
\end{align}
where 
\begin{align}\label{langrange1}
    \boldsymbol{\lambda}_{t}=&(\lambda_{1,t}, \lambda_{2, t}, \ldots, \lambda_{M, t}),\\ \label{langrange2}
\boldsymbol{\lambda}_{\tau:T} =&(\boldsymbol{\lambda}_{\tau}, \boldsymbol{\lambda}_{\tau+1}, \ldots, \boldsymbol{\lambda}_T), \\ \label{langrange3}
\boldsymbol{\mu}_{\tau:T}=&(\mu_{\tau}, \mu_{\tau+1}, \ldots, \mu_T),
\end{align}
and $\bar p_{\tau}(\boldsymbol{\lambda}_{\tau:T}, \boldsymbol{\mu}_{\tau:T})$ is the optimal value of \eqref{dual}.
 
The dual problem to \eqref{dual} is given by 
\begin{align}\label{dualProblem}
(\boldsymbol{\lambda}^*_{\tau:T}, \boldsymbol{\mu}^*_{\tau:T}) = \argmax_{(\boldsymbol{\lambda}_{\tau:T}, \boldsymbol{\mu}_{\tau:T})\geq 0}~\bar p_{\tau}(\boldsymbol{\lambda}_{\tau:T}, \boldsymbol{\mu}_{\tau:T}),
\end{align}
where $(\boldsymbol{\lambda}^*_{\tau:T}, \boldsymbol{\mu}^*_{\tau:T})$ is the optimal dual solution.}

\subsection{Optimal Solution to \eqref{dual}}
{\violet The problem \eqref{dual} can be decomposed into $K$ sub-problems, one per task, in which the $(m, j)$-th sub-problem is given by 
\begin{align}\label{subproblem}
\inf_{\pi_{m,j} \in \Pi_{m, j}}\sum_{t=\tau}^{T} \gamma^{t-\tau} &\mathbb{E}_{\pi_{m,j}}  \bigg[w_{m, j} p_{m, j}(\Delta_{m, j}(t))\nonumber\\
&+ \lambda_{m, t} \pi_{m, j}(t)+ \mu_t \pi_{m, j}(t)n_{m, j}\bigg],
\end{align}
where 
$\pi_{m, j}=(\pi_{m, j}(\tau), \ldots, \pi_{m, j}(T))$ is a scheduling policy for the task and $\Pi_{m, j}$ is the set of all causal signal-ignorant policies.} 

By solving the sub-problem \eqref{subproblem} for each $(m, j)$-th MDP and combining the solutions, we get an optimal policy for \eqref{dual}. Following this approach, we present an optimal policy to the sub-problem \eqref{subproblem} in Lemma \ref{lemma1}.  

\begin{lemma}\label{lemma1}
There exists an optimal policy for \eqref{subproblem} in which the optimal decision $\pi^*_{m, j}(t)$ at each time $t$ minimizes the action value function 
\begin{align}\label{optimalsubpolicy}
\min_{\pi_{m,j}(t)\in \{0, 1\}} Q^{\boldsymbol{\lambda}_{m, t:T}, \boldsymbol{\mu}_{t:T}}_{m, j, t}( \Delta_{m, j}(t),\pi_{m, j}(t)), 
\end{align}
where the action value function $Q^{\boldsymbol{\lambda}_{m, t:T}, \boldsymbol{\mu}_{t:T}}_{m, j, t}( \cdot,\cdot)$ is given by
\begin{align}
&Q^{\boldsymbol{\lambda}_{m, t:T}, \boldsymbol{\mu}_{t:T}}_{m, j, t}( \delta,a)\nonumber\\
&= w_{m, j} p_{m,j}(\delta)+(1-a)\gamma V^{\boldsymbol{\lambda}_{m, t+1:T}, \boldsymbol{\mu}_{t+1:T}}_{m, j, t+1}(\delta+1)\nonumber\\
&\quad +a\left(\lambda_{m,t}+ \mu_t n_{m, j}+ \gamma V^{\boldsymbol{\lambda}_{m, t+1:T}, \boldsymbol{\mu}_{t+1:T}}_{m, j, t+1}(1)\right),
\end{align}
the value function $V^{\boldsymbol{\lambda}_{m, t:T}, \boldsymbol{\mu}_{t+1:T}}_{m, j, t}(\delta)$ for all $\delta \in \mathbb Z^{+}$ and $t=\tau,\tau+1, \ldots, T$ is given by 
\begin{align}\label{valuefunction}
V^{\boldsymbol{\lambda}_{m, t:T}, \boldsymbol{\mu}_{t:T}}_{m, j, t}(\delta)=\min_{a\in \{0, 1\}}Q^{\boldsymbol{\lambda}_{m, t:T}, \boldsymbol{\mu}_{t:T}}_{m, j, t}( \delta,a), 
\end{align}
and for $t=T+1,$
\begin{align}
V^{\boldsymbol{\lambda}_{m, T+1:T}, \boldsymbol{\mu}_{T+1:T}}_{m, j, T+1}(\delta)=0.
\end{align}
\end{lemma}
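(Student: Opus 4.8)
The statement is a standard finite-horizon dynamic programming result applied to the decoupled sub-problem \eqref{subproblem}, so the plan is to recognize \eqref{subproblem} as a finite-horizon MDP with a well-defined state, action, per-stage cost, and transition kernel, and then invoke the Bellman optimality principle. First I would fix a task $(m,j)$ and a starting time $\tau$, and identify the ingredients: the state at time $t$ is the AoI $\Delta_{m,j}(t)\in\mathbb{Z}^{+}$; the action is $\pi_{m,j}(t)\in\{0,1\}$; the (deterministic) transition is exactly \eqref{AoIprocess}, i.e. $\delta\mapsto 1$ if $a=1$ and $\delta\mapsto \delta+1$ if $a=0$; and the discounted per-stage cost incurred at time $t$ from state $\delta$ under action $a$ is $\gamma^{t-\tau}\bigl(w_{m,j}p_{m,j}(\delta)+\lambda_{m,t}a+\mu_t a\, n_{m,j}\bigr)$. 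Crucially, under Assumptions~\ref{independence}--\ref{stationary} the cost $p_{m,j}(\cdot)$ is a time-invariant function of the AoI (as established in \eqref{instantaneous_err1}), so the only time dependence in the stage cost comes from the explicit discount factor and the time-varying multipliers $\lambda_{m,t},\mu_t$, which is why the value function carries the index $t$ and the multiplier tails $\boldsymbol{\lambda}_{m,t:T},\boldsymbol{\mu}_{t:T}$.

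Next I would set up the backward recursion. Define $V^{\boldsymbol{\lambda}_{m,T+1:T},\boldsymbol{\mu}_{T+1:T}}_{m,j,T+1}(\delta)=0$ as the terminal condition (there is no cost beyond horizon $T$), and for $t=T,T-1,\dots,\tau$ define $Q^{\boldsymbol{\lambda}_{m,t:T},\boldsymbol{\mu}_{t:T}}_{m,j,t}(\delta,a)$ by the displayed formula and $V^{\boldsymbol{\lambda}_{m,t:T},\boldsymbol{\mu}_{t:T}}_{m,j,t}(\delta)=\min_{a\in\{0,1\}}Q^{\boldsymbol{\lambda}_{m,t:T},\boldsymbol{\mu}_{t:T}}_{m,j,t}(\delta,a)$. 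The factor $\gamma$ multiplying the continuation value in the definition of $Q$ accounts for the shift from $\gamma^{t-\tau}$ to $\gamma^{(t+1)-\tau}$ in the one-step-ahead cost; one should double-check that this normalization matches the objective in \eqref{subproblem} when everything is summed up. Then a straightforward backward induction on $t$ shows that $\gamma^{t-\tau}V^{\boldsymbol{\lambda}_{m,t:T},\boldsymbol{\mu}_{t:T}}_{m,j,t}(\Delta_{m,j}(t))$ equals the optimal cost-to-go from time $t$ onward given the current AoI, and that the greedy action $\pi^*_{m,j}(t)\in\argmin_{a\in\{0,1\}}Q^{\boldsymbol{\lambda}_{m,t:T},\boldsymbol{\mu}_{t:T}}_{m,j,t}(\Delta_{m,j}(t),a)$ is optimal. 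The base case $t=T+1$ is immediate, and the inductive step is exactly the Bellman optimality equation for a finite-horizon MDP with perfect state observation; I would cite a standard reference (e.g., Bertsekas or Puterman) for the general principle rather than re-derive it.

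The one genuinely nontrivial point — and the main thing to get right rather than the main obstacle — is to justify that restricting to policies in $\Pi_{m,j}$ (causal, signal-agnostic) loses nothing: since the state $\Delta_{m,j}(t)$ is fully observed by any policy in $\Pi_{m,j}$, evolves deterministically (given actions), and the cost depends only on $(\delta,a,t)$, the sub-problem is a genuine MDP and a Markov (indeed deterministic-given-history) policy attains the infimum; randomization and history-dependence cannot help. This uses Assumption~\ref{independence} to guarantee that the AoI dynamics are unaffected by the unobserved signal values, so the scheduler's information restriction is harmless here. I would also remark that the infimum in \eqref{subproblem} is attained because at each stage the action set $\{0,1\}$ is finite and the horizon is finite, so $V$ is well-defined and finite (the inference error $p_{m,j}$ and the multipliers are bounded on the finite horizon). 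Assembling these observations gives the lemma; no delicate estimates are needed, only careful bookkeeping of the discount factor and the multiplier indexing.
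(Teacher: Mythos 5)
Your proposal is correct and follows essentially the same route as the paper, which simply observes that $\pi^*_{m,j}(t)$ satisfies the finite-horizon Bellman optimality equation \eqref{Optimality} and cites standard dynamic-programming references. Your additional bookkeeping (identifying the state, action, deterministic transition \eqref{AoIprocess}, the discount normalization, and the harmlessness of the signal-agnostic restriction via Assumption~\ref{independence}) is a more explicit version of the same argument, not a different one.
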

\begin{proof}
The action $\pi^*_{m,j}(t)$ is optimal because it satisfies the Bellman optimality equation \cite{bertsekasdynamic1, puterman2014markov}: \begin{align}\label{Optimality}
&V^{\boldsymbol{\lambda}_{m, t:T}, \boldsymbol{\mu}_{t:T}}_{m, j, t}(\delta)\nonumber\\
&\!\!=\min_{a \in \{0,1\}}w_{m, j}p_{m, j}(\delta)+a\left(\lambda_{m,t}+\mu_{t} n_{m,j}\right)\nonumber\\
&+\gamma \mathbb E\!\left[V^{\boldsymbol{\lambda}_{m, t+1, T}, \boldsymbol{\mu}_{t+1:T}}_{m, j, t+1}(\Delta_{m,j}(t+1))\bigg|\Delta_{m,j}(t)\!=\!\delta,\! \pi_{m,j}(t)\!=a\!\right].
\end{align}
\end{proof}

Lemma \ref{lemma1} establishes an optimal decision $\pi^*_{m, j}(t)$ for problem \eqref{subproblem} by using dynamic programming method. The backward induction method to compute the value function $V^{\boldsymbol{\lambda}_{m, t:T}, \boldsymbol{\mu}_{t:T}}_{m, j, t}(\delta)$ for $t=\tau, \tau+1, \ldots, T$ is given by
\begin{align}
&V^{\boldsymbol{\lambda}_{m, t:T}, \boldsymbol{\mu}_{t:T}}_{m, j, t}(\delta)\nonumber\\
&=\!w_{m, j} p_{m,j}(\delta)+\!\min_{a \in \{0, 1\}}\bigg\{\!(1-a)\gamma V^{\boldsymbol{\lambda}_{m, t+1:T}, \boldsymbol{\mu}_{t+1:T}}_{m, j, t+1}(\delta+1)\nonumber\\
&+a\left(\lambda_{m,t}+\mu_t n_{m,j}+\gamma V^{\boldsymbol{\lambda}_{m, t+1:T}, \boldsymbol{\mu}_{t+1:T}}_{m, j, t+1}(1)\right)\bigg\}.
\end{align}
However, if the AoI $\delta$ can take infinite values, this is computationally intractable. Thus, we restrict the computation of value function to a finite range $\delta=1, 2, \ldots, \bar \delta$, and approximate $V^{\boldsymbol{\lambda}_{m, t:T}, \boldsymbol{\mu}_{t:T}}_{m, j, t}(\delta) \approx V^{\boldsymbol{\lambda}_{m, t:T}, \boldsymbol{\mu}_{t:T}}_{m, j, t}(\bar \delta)$ for values exceeding this range. In reality, this truncation will have a negligible effect since (i) higher AoI values are rarely visited in practice \cite{Ceran}, and (ii) the inference error $p_{m, j}(\delta)$ tends to converge to an upper bound as AoI becomes large, as seen in some recent works \cite{shishertimely, ShisherMobihoc, shisher2023learning} and our machine learning experiments in Figs. \ref{fig:roboterror} \& \ref{fig:traffic}. The backward induction algorithm has a time complexity of $O(\bar\delta T)$.

\subsection{Solution to \eqref{dualProblem}}


{\violet Next, we solve the dual problem \eqref{dualProblem}. Similar to \cite[Proposition 3.1(c)]{brown2023fluid}, we can show that the optimal objective value $\bar p_{\tau}(\boldsymbol{\lambda}_{\tau:T}, \boldsymbol{\mu}_{\tau:T})$ of the problem \eqref{dual} is concave in $\boldsymbol{\lambda}_{\tau:T}$ and  $\boldsymbol{\mu}_{\tau:T}$. Because of the concavity, we can solve the dual problem \eqref{dualProblem} by the stochastic sub-gradient ascent method. The $(i+1)$-th iteration for the sub-gradient ascent method for all $m=1, 2, \ldots$ and $t=\tau, \tau+1, \ldots, T$ is given by

\begin{align}\label{lambdaopt}
    &\lambda_{m, t}(i+1)=\nonumber\\
    &\max\bigg\{\lambda_{m, t}(i)+ \frac{\gamma^{t-\tau}\beta_m}{Ki}\left(\sum_{j=1}^{k_m}\pi_{m, j}^*(t)-C_m\right), 0\bigg\}, \\
     &\mu_{t}(i+1)=\nonumber\\\label{muopt}
    &\max\bigg\{\mu_{t}(i)+ \frac{\gamma^{t-\tau}\beta}{Ki}\left(\sum_{m=1}^M\sum_{j=1}^{k_m}\pi_{m, j}^*(t)n_{m,j}-N\right), 0\bigg\},
\end{align}
where $\beta, \beta_m>0$ are the step sizes and $\pi_{m, j}^*(t)$ is the optimal action to the $(m, j)$-th sub-problem determined with $\lambda_{m, t}(i), \lambda_{m, t+1}(i), \ldots, \lambda_{m, T}(i)$, and $\mu_{t}(i), \mu_{t+1}(i), \ldots, \mu_{T}(i)$ obtained in the $i$-th iteration.} 

\section{Scheduling Policy}
\label{sec:policy}

\subsection{Reoptimized Maximum Gain First (MGF) Policy}
\label{ssec:mgf}
While the decision $\pi^*_{m,j}(t)$ provided in Lemma \ref{lemma1} may violate constraints \eqref{Sceduling_constraint1}-\eqref{Sceduling_constraint2}, we exploit the structure of the decision $\pi^*_{m,j}(t)$ to develop a scheduling policy for the original problem \eqref{Multi-scheduling_problem}-\eqref{Sceduling_constraint2}. The proposed policy utilizes the notion of gain indices discussed in some recent papers \cite{shishertimely, OrneeMILCOM, chen2023index}. 
{\blue To determine gain indices for our MTRI problem at time $t$, we use the action value function $Q^{\boldsymbol{\lambda}_{m, t:T}^*, \boldsymbol{\mu}_{t:T}^*}_{m, j, t}( \Delta_{m, j}(t),\pi_{m, j}(t))(\cdot)$ associated with Lagrange multipliers $\boldsymbol{\lambda}_{m, t:T}^*$ and $\boldsymbol{\mu}_{t:T}^*$.}

\begin{definition}[\textbf{Gain Index}]\label{gainindex}\cite{shishertimely}
{\blue Given an AoI value $\Delta_{m, j}(t)=\delta$, the gain index $\alpha_{m, j, t}(\delta)$ for the $(m,j)$-th task at time $t$ is the difference of two actions values, determined by 
\begin{align}\label{gain}
\alpha_{m, j, t}(\delta)=&Q^{\boldsymbol{\lambda}_{m, t:T}^*, \boldsymbol{\mu}_{t:T}^*}_{m, j, t}( \Delta_{m, j}(t),\pi_{m, j}(t))(\delta,0)\nonumber\\
&-Q^{\boldsymbol{\lambda}_{m, t:T}^*, \boldsymbol{\mu}_{t:T}^*}_{m, j, t}( \Delta_{m, j}(t),\pi_{m, j}(t))(\delta,1),
\end{align}
where the Lagrange multipliers
are obtained after solving \eqref{dualProblem} with $\tau=t$.}
\end{definition}

\begin{algorithm}[t]
\SetAlgoLined
\SetAlgoNoEnd
\SetKwInOut{Input}{input}
\caption{\small Reoptimized Maximum Gain First Policy}\label{alg:gain}
\For{$t=0, 1, \ldots$}
{\emph{Update $\Delta_{m, j}(t)$ for all $(m, j)$}\\
Initialize $\pi_{m, j}(t) \gets 0 $ for all $(m, j)$\\
{\blue \emph{Get $\boldsymbol{\lambda}^*$ and $\mu^*$ that maximizes $\bar p(\boldsymbol{\lambda}(t), \boldsymbol{\mu}(t); t:T)$}}\\
\emph{$\alpha_{m, j} \gets \alpha_{m, j, t} (\Delta_{m, j}(t))$} for all $(m, j)$\\

$C_{m, \text{curr}} \gets 0$ and $N_{\text{curr}} \gets 0$\\
 $A(t)\gets \{(m, j):\alpha_{m, j}>0)\}$\\
\While{$A(t)$ is not empty}
{{\blue $(m^*, j^*) \gets \argmax_{m, j} \alpha_{m, j}$}\\
$c \gets C_{m^*, \text{curr}}+1$ and $n \gets N_{\text{curr}}+n_{m^*,j^*}$\\
\If{$c \leq C_{m^*}$ and $n \leq N$}
{Update $\pi_{m^*,j^*}(t)\gets 1$ \\
Update $C_{m^*, \text{curr}} \gets c$ and $N_{\text{curr}} \gets n$}
$A(t)=A(t)\setminus(m^*, j^*)$}}
\end{algorithm}

The gain index $\alpha_{m, j, t}(\delta)$ quantifies the discounted total reduction in inference errors when action $\pi_{m, j}(t)=1$ is chosen over $\pi_{m, j}(t)=0$, where the latter implies no resource allocation for the $(m,j)$-th inference task at time $t$. This metric enables strategic resource utilization at each time slot to enhance overall system performance. 

{\violet Algorithm \ref{alg:gain} presents our reoptimized maximum gain first (MGF) scheduler for solving the main problem \eqref{Multi-scheduling_problem}-\eqref{Sceduling_constraint2}. We denote $\pi_{\text{MGF}}$ by the policy provided in Algorithm \ref{alg:gain}. At each time $t$, the policy $\pi_{\text{MGF}}$ prioritizes generating and transmitting features ($\pi_{m, j}(t)=1$) for the inference tasks with highest gain index, while adhering to the available communication and computation resources. Our policy then proceeds as follows: 
\begin{itemize}
\item[(1)] First, our policy re-optimizes $\boldsymbol{\lambda}^*(t)$ and $\mu^*(t)$ by maximizing $\bar p(\boldsymbol{\lambda}(t), \boldsymbol{\mu}(t); t:T)$. Then, the gain indices $\alpha_{m, j, t}(\Delta_{m, j}(t))$ for all tasks $(m, j)$ defined in \eqref{gain} are calculated.   

\item[(2)] Let $A(t)$ be the set of inference tasks with positive gain indices:
\begin{align}
    A(t)=\bigg\{(m, j): \alpha_{m, j, t}(\Delta_{m, j}(t))>0\bigg\}
\end{align} 
Then, our policy selects the inference task $(m^*, j^*)$ that satisfies
\begin{align}\label{gainimplement}
    (m^*, j^*)=\argmin_{(m, j)\in A(t)}\alpha_{m, j, t}(\Delta_{m, j}(t)).
\end{align}
Source $m^*$ generates and transmits its features for the $(m^*, j^*)$-th inference task, provided that the resource budget has not been exhausted. 
\item[(3)] Remove the tuple $(m^*, j^*)$ from $A(t)$, i.e., $A(t)=A(t)\setminus(m^*, j^*)$. Repeat (1) until the set $A(t)$ is empty.
\end{itemize}
Comparing \eqref{optimalsubpolicy}, \eqref{gain}, and \eqref{gainimplement}, we observe that our policy closely approximates the optimal solution to the Lagrangian relaxed problem \eqref{dual}, aiming to make as close to full use of the resource constraints as possible.}

\subsection{Performance Analysis}
\label{ssec:analysis}
We now analyze the performance of our policy relative to the original problem \eqref{Multi-scheduling_problem}-\eqref{Sceduling_constraint2}. Following standard practice in the weakly-coupled MDP literature \cite{brown2023fluid, gast2024reoptimization}, {\violet a set of sub-problems at source $m$ are said to be in the same class if they share identical penalty functions, weights, and transition probabilities.} 

\begin{definition}[\textbf{Asymptotic optimality}]\label{defAsymptotically optimal}
{\violet Consider a ``base'' MTRI system with $N$ channels, $M$ sources, $k_m$ classes of sub-problems per source $m$, a computation resource budget $C_m$ for source $m$ and $N$ shared communication resources by $M$ sources. Now, consider a ``multiplied system", where each class of sub-problems per source $m$ contains $r$ inference tasks, each source $m$ has $rC_m$ computation resources and $rN$ communication resources, while maintaining a constant $M$ sources. Let $\bar p^r_{{\text{MGF}}}$ and $\bar p^{r}_{{\pi}_{opt}}$ represent the discounted infinite horizon sum of inference errors under policy $\pi_{\text{MGF}}$ and an optimal policy for the multiplied system, respectively. The policy $\pi_{\text{MGF}}$ is asymptotically optimal if $\bar p^r_{{\text{MGF}}}=\bar p^r_{{\pi}_{opt}}$ for all $\pi \in \Pi$ as inference task per class $r$ approaches $\infty$, i.e.,
\violet \begin{align}
    \lim_{r\to \infty} \bar p^r_{{\text{MGF}}}=\bar p^r_{{\pi}_{opt}}.
\end{align}}
\end{definition}

{\violet First, we establish Lemma \ref{lemmatheorem4}, which is a key tool to showing asymptotic optimality. We define a policy $\pi^*=(\pi^*_{m,j}(t))_{\forall m,j, t\leq T}$, where $\pi^*_{m, j}(t)$ is the reoptimized action obtained by using Lemma \ref{lemma1} with the Lagrange multipliers that maximize $\bar p_t(\boldsymbol{\lambda}_{t:T}, \boldsymbol{\mu}_{t:T})$ defined in \eqref{dualProblem}. Using \eqref{optimalsubpolicy} and \eqref{gain}, we can verify that $\pi^*_{m, j}(t) = 1$ if $\alpha_{m,j,t}(\Delta_{m,j}(t))>0$.

\begin{lemma}\label{lemmatheorem4}
   For any time $t$ and AoI values, the expected number of subproblems with actions that differ between the reoptimized MGF policy provided in Algorithm \ref{alg:gain} and the policy $\pi^*$ is bounded from above by
   $$\sum_{m=1}^M \sqrt{k_m}+\sqrt{\sum_{m=1}^M k_m}.$$
\end{lemma}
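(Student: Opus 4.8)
The plan is to bound the discrepancy between the reoptimized MGF policy $\pi_{\text{MGF}}$ and the Lagrangian-greedy policy $\pi^*$ by analyzing, separately for each source $m$ and then jointly, how many tasks can be ``rejected'' by the capacity checks inside the \textbf{while}-loop of Algorithm \ref{alg:gain}. The key observation is that the two policies choose actions according to the same rule --- schedule task $(m,j)$ whenever $\alpha_{m,j,t}(\Delta_{m,j}(t))>0$, i.e., whenever the gain index is positive --- with the only difference being that $\pi_{\text{MGF}}$ additionally enforces the hard constraints \eqref{Sceduling_constraint1}--\eqref{Sceduling_constraint2} by processing tasks in decreasing order of gain and skipping a task once either its source's computation budget $C_m$ or the global channel budget $N$ would be exceeded. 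Hence a task $(m,j)$ has $\pi_{m,j}^*(t)=1$ but $\pi_{\text{MGF},m,j}(t)=0$ only if, at the moment it is considered, either (a) source $m$ has already used all $C_m$ of its computation slots, or (b) the cumulative channel usage $N_{\text{curr}}$ plus $n_{m,j}$ exceeds $N$; and conversely $\pi_{\text{MGF}}$ never schedules a task that $\pi^*$ does not, so the action sets differ only by such ``dropped'' tasks.

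First I would handle the per-source computation constraint. For a fixed source $m$, consider the number $D_m$ of tasks in class-expanded source $m$ that $\pi^*$ schedules but MGF drops due to the computation budget being full. Since $\pi^*$ is itself the maximizer of the relaxed dual, the expected fractional ``overflow'' of source $m$'s computation constraint under $\pi^*$ is controlled: by complementary slackness and concavity (Proposition~3.1-type arguments as in \cite{brown2023fluid}), the expected number of scheduled tasks at source $m$ exceeds $C_m$ by at most a term scaling like the standard deviation of a sum of $k_m$ (conditionally) independent Bernoulli-type indicators, i.e., $O(\sqrt{k_m})$. This is the standard fluid-vs-stochastic gap: the LP/dual solution balances the constraint in expectation, and the realized randomized rounding deviates by $O(\sqrt{k_m})$ in $L^1$. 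Summing over sources gives the $\sum_{m=1}^M \sqrt{k_m}$ term. Then I would handle the single global communication constraint in the same way: the aggregate channel demand $\sum_{m,j}\pi^*_{m,j}(t)n_{m,j}$ concentrates around $N$ with $L^1$-deviation $O(\sqrt{\sum_{m=1}^M k_m})$, since it is a weighted sum of $\sum_m k_m$ bounded independent indicators; this yields the $\sqrt{\sum_{m=1}^M k_m}$ term. Adding the two sources of discrepancy --- a task is dropped by MGF only because of a computation overflow at its source or a communication overflow globally --- gives the claimed bound $\sum_{m=1}^M\sqrt{k_m}+\sqrt{\sum_{m=1}^M k_m}$.

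The main technical obstacle is making the ``$L^1$-deviation $= O(\sqrt{k})$'' step rigorous in the presence of the Lagrange multipliers and the reoptimization: one must argue that under $\pi^*$ the actions $\pi^*_{m,j}(t)$ across tasks $j$ in the same class are exchangeable and, conditioned on the AoI profile, their schedule-indicators behave like independent (or negatively associated) Bernoulli variables so that a Cauchy--Schwarz / variance bound applies, and that the chosen multipliers genuinely equalize the constraint in expectation (so there is no $O(k_m)$ first-order bias, only the $O(\sqrt{k_m})$ fluctuation). I would invoke the concavity of $\bar p_\tau(\boldsymbol\lambda_{\tau:T},\boldsymbol\mu_{\tau:T})$ (already asserted in Sec.~\ref{sec:relaxation}) together with a subgradient/KKT characterization at the optimum $(\boldsymbol\lambda^*,\boldsymbol\mu^*)$: the subgradient with respect to $\lambda_{m,t}$ is $\mathbb E[\sum_j \pi^*_{m,j}(t)] - C_m$ (up to the $\gamma^{t-\tau}/K$ factor), which must be $\le 0$ at the max, and similarly for $\mu_t$; combined with a careful counting of how the greedy rejection in the \textbf{while}-loop can only be triggered by realized overflow, this pins the expected number of differing actions to exactly the fluctuation scale. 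Finally I would note this is a per-time-slot, per-AoI-profile statement (as in the lemma's phrasing ``for any time $t$ and AoI values''), so no union bound over time is needed here --- that is deferred to the proof of Theorem~\ref{theorem4}.
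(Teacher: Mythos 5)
Your proposal is correct and follows essentially the same route as the paper's proof: bound the number of differing actions by the realized constraint overflows $\sum_m (C^*_m(t)-C_m)^+ + (N^*(t)-N)^+$ (using that MGF only ever drops tasks that $\pi^*$ activates), then use the dual-optimality fact $\mathbb{E}[C^*_m(t)]\le C_m$ together with a Jensen/variance bound over the $k_m$ independent Bernoulli action indicators to get $\mathbb{E}[(C^*_m(t)-C_m)^+]\le\sqrt{k_m}$, and analogously $\sqrt{\sum_m k_m}$ for the shared channel constraint. The independence/variance step you flag as the main obstacle is exactly what the paper resolves via sub-problem independence under $\pi^*$ plus the Bhatia--Davis bound, so your outline fills in rather than diverges from the published argument.
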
  
\begin{proof}
    See Appendix \ref{plemmatheorem4}.
\end{proof}}

Then, we can obtain our main theoretical result:
 
{\violet \begin{theorem}\label{theorem4}
If 
there exists a finite constants $\bar p_l$ and $\bar p_h$ such that $\bar p_l \leq w_{m, j} p_{m, j}(\delta) \leq \bar p_h$ for any sub-problem $(m, j)$ and AoI value $\delta=1, 2, \ldots,$ and 
\begin{align}
T \geq \log_{\frac{1}{\gamma}}\left(\sum_{m=1}^M \sqrt{rk_m}\right),
\end{align}
then the MGF policy is asymptotically optimal as the number of inference tasks $r$ for all classes $(m,j)$ increases to infinite. Specifically, we have 
\begin{align}\label{result}
\bar p^r_{{\text{MGF}}}-\bar p^r_{\text{opt}}&\leq \frac{1}{\sum_{m=1}^M \sqrt{rk_m}} 
\left(\frac{2M(\bar p_h-\bar p_l)\gamma}{(1-\gamma)^3} + \frac{(\bar p_h-\bar p_l)\gamma}{(1-\gamma)}\right)\nonumber\\
&=O\bigg(\frac{1}{\sum_{m=1}^M\sqrt{rk_m}}\bigg),
\end{align}
where $k_m$ is the number of sub-problems per source, $M$ is the number of sources, and $\gamma$ is the discount factor.
\begin{proof}
    See Appendix \ref{ptheorem4}.   
\end{proof}
\end{theorem}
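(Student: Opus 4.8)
The plan is to sandwich $\bar p^r_{\text{MGF}}$ between the reoptimized Lagrangian relaxation value and $\bar p^r_{\text{opt}}$, and then control the two sources of slack separately: the error from truncating the horizon at $T$, and the error from MGF deviating from the (possibly infeasible) reoptimized relaxed policy $\pi^*$. First I would invoke weak duality together with the concavity of $\bar p_\tau(\boldsymbol{\lambda}_{\tau:T},\boldsymbol{\mu}_{\tau:T})$ already noted in Sec.~\ref{sec:relaxation}: for any nonnegative multipliers the penalty terms in \eqref{dual} are nonpositive at a primal feasible point, so $\bar p_\tau(\boldsymbol{\lambda}^*_{\tau:T},\boldsymbol{\mu}^*_{\tau:T})$ lower-bounds the horizon-$T$ constrained optimum, and hence (after accounting for truncation) $\bar p^r_{\text{opt}}$. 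The horizon-truncation gap is bounded using $\bar p_l\le w_{m,j}p_{m,j}(\delta)\le\bar p_h$ and the hypothesis $T\ge\log_{1/\gamma}(\sum_m\sqrt{rk_m})$, which gives $\gamma^{T}\le 1/\sum_m\sqrt{rk_m}$ and therefore a tail contribution of at most $\frac{(\bar p_h-\bar p_l)\gamma}{(1-\gamma)\sum_m\sqrt{rk_m}}$, i.e.\ the second term of \eqref{result}. It then suffices to bound $\bar p^r_{\text{MGF}}-\bar p^r_{\text{relax}}$.

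For that difference I would run a telescoping argument along the MGF sample path, comparing against $\pi^*$ (the reoptimized action from Lemma~\ref{lemma1} that, by the remark preceding Lemma~\ref{lemmatheorem4}, activates exactly the subproblems with positive gain index). The structural fact I would establish is that the relaxed value function decomposes additively over the $K=\sum_m rk_m$ subproblems, and that each per-subproblem value, viewed as a function of its AoI, has oscillation at most $\frac{\bar p_h-\bar p_l}{1-\gamma}$; hence for any two global states the relaxed value differs by at most (number of subproblems in differing local AoI states) $\times\,\frac{\bar p_h-\bar p_l}{1-\gamma}$. Expanding $\bar p^r_{\text{MGF}}-\bar p^r_{\text{relax}}$ as a $\gamma$-discounted sum of one-step Bellman residuals and applying this Lipschitz-type bound at every slot, together with the fact that reoptimization at each slot lets us compare MGF's state against a freshly-solved $\pi^*$ rather than a drifting trajectory, reduces the whole estimate to counting, at each time $t$, how many subproblems can be in a different AoI state under MGF than under $\pi^*$.

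That count is exactly what Lemma~\ref{lemmatheorem4} controls: at each slot MGF and $\pi^*$ disagree on at most $\sum_m\sqrt{rk_m}+\sqrt{\sum_m rk_m}$ subproblems in expectation (applying the lemma to the multiplied system). Since a subproblem can be in a differing AoI state at time $t$ only if the two policies disagreed at some earlier slot, the expected number of differing-state subproblems at time $t$ is at most $t\,(\sum_m\sqrt{rk_m}+\sqrt{\sum_m rk_m})$; using $\sum_{t\ge 1}\gamma^t t=\frac{\gamma}{(1-\gamma)^2}$, the $\frac1K$ normalization in \eqref{Multi-scheduling_problem}, the value-oscillation factor $\frac{\bar p_h-\bar p_l}{1-\gamma}$, the elementary inequality $\sqrt{\sum_m rk_m}\le\sum_m\sqrt{rk_m}$, and the Cauchy--Schwarz bound $\sum_m rk_m\ge\big(\sum_m\sqrt{rk_m}\big)^2/M$, the deviation term collapses to $\frac{2M(\bar p_h-\bar p_l)\gamma}{(1-\gamma)^3\sum_m\sqrt{rk_m}}$. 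Adding the truncation term gives \eqref{result}, and letting $r\to\infty$ yields $\lim_{r\to\infty}\bar p^r_{\text{MGF}}=\bar p^r_{\text{opt}}$, establishing asymptotic optimality.

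I expect the main obstacle to be the middle step: making rigorous how action disagreements between MGF and $\pi^*$ propagate into state disagreements over the horizon without compounding worse than linearly in $t$, and formalizing that the per-slot reoptimization of $(\boldsymbol{\lambda}^*_{t:T},\boldsymbol{\mu}^*_{t:T})$ really permits a one-step comparison. Secondary technical points are proving the additive decomposition of the relaxed value function and the uniform $\frac{\bar p_h-\bar p_l}{1-\gamma}$ bound on its per-subproblem oscillation (where the time-varying multipliers must be handled so that their contributions do not leak into the constant), and keeping the weak-duality inequality clean despite the finite-horizon truncation. The rest is geometric-series bookkeeping and the Cauchy--Schwarz step that produces the $M$-dependence and the $1/\sum_m\sqrt{rk_m}$ rate that is tighter than the $1/\sqrt{\sum_m rk_m}$ of \cite{brown2023fluid}.
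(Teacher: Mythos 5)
Your proposal follows essentially the same route as the paper: weak duality plus the horizon-truncation bound yields the $\frac{(\bar p_h-\bar p_l)\gamma}{(1-\gamma)\sum_m\sqrt{rk_m}}$ term, Lemma~\ref{lemmatheorem4} bounds the per-slot action disagreements, and the same $\ell_1$/$\ell_2$ norm and Cauchy--Schwarz inequalities convert $2\sum_m\sqrt{rk_m}$ over $\sum_m rk_m$ into the $\frac{2M}{\sum_m\sqrt{rk_m}}$ factor. The only difference is that you sketch a telescoping/value-oscillation proof of the intermediate bound $\bar p_{\text{MGF}}(T)-\bar p_1(\boldsymbol{\lambda}^*_{1:T},\boldsymbol{\mu}^*_{1:T})\le\frac{\gamma(\bar p_h-\bar p_l)\max_t B_t}{(1-\gamma)^3\sum_m k_m}$, which the paper states as Lemma~\ref{lemma3} and outsources to \cite[Corollary 4.4]{brown2023fluid}; your reconstruction produces the same constants.
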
}


According to Theorem \ref{theorem4} and Definition \ref{defAsymptotically optimal}, our policy approaches the optimal as the number of inference tasks $r$ per class of sub-problems increases asymptotically. 

While prior work has introduced gain-index-based policies for RMAB problems \cite{OrneeMILCOM, chen2023index, shisher2023learning}, these cannot be directly applied to general weakly-coupled MDPs. {\blue Our gain-index-based policy, a specialized re-optimized fluid policy (Definition \ref{ROFP}) \cite{brown2023fluid}, achieves tighter asymptotic optimality for MTRI systems (see \eqref{result}) compared to the $O(\frac{1}{\sqrt{\sum_{m=1}^Mrk_m}})$ bound in \cite{brown2023fluid}. This improvement is obtained by using Lemma \ref{lemmatheorem4}, which strengthens the result \cite[Lemma EC1.1]{brown2023fluid} by exploiting the MTRI constraint structure: sub-problems utilize only their source's computational resources. Unlike the general system in \cite{brown2023fluid} where all resources are globally shared, MTRI systems have local (computational) and global (communication) resources, yielding a tighter bound. For example, a sub-problem associated with source $m_1$ would only consume computational resources from source $m_1$ and not from another source $m_2$, unlike in \cite{brown2023fluid}, where all sub-problems share all resources in the system.}

\begin{algorithm}[t]
\SetAlgoLined
\SetAlgoNoEnd
\SetKwInOut{Input}{input}
\caption{Simplified Reoptimized MGF Policy }\label{alg:gain1}
\For{$t=0, 1, \ldots$}
{\emph{Update $\Delta_{m, j}(t)$ for all $(m, j)$}\\
Initialize $\pi_{m, j}(t) \gets 0 $ for all $(m, j)$\\
{\violet \emph{Get $\lambda_1^*, \lambda^*_2, \ldots, \lambda^*_M$, and $\mu^*$ that maximizes $\bar p_{t}(\lambda_1,\lambda_2, \ldots, \lambda_M, \mu)$}}\\
\emph{$\alpha_{m, j} \gets \alpha_{m, j, t} (\Delta_{m, j}(t))$} for all $(m, j)$\\

$C_{m, \text{curr}} \gets 0$ and $N_{\text{curr}} \gets 0$\\
 $A(t)\gets \{(m, j):\alpha_{m, j}>0)\}$\\
\While{$A(t)$ is not empty}
{{\blue $(m^*, j^*) \gets \argmax_{m, j} \alpha_{m, j}$}\\
$c \gets C_{m^*, \text{curr}}+1$ and $n \gets N_{\text{curr}}+n_{m^*,j^*}$\\
\If{$c \leq C_{m^*}$ and $n \leq N$}
{Update $\pi_{m^*,j^*}(t)\gets 1$ \\
Update $C_{m^*, \text{curr}} \gets c$ and $N_{\text{curr}} \gets n$}
$A(t)=A(t)\setminus(m^*, j^*)$}}
\end{algorithm}

{\violet \subsection{Simplified Reoptimized Maximum Gain First Policy }\label{modification} In Algorithm \ref{alg:gain}, at every time $t$, we require to optimize the Lagrange variables
$\boldsymbol{\lambda}_{t:T}$ and $\boldsymbol{\mu}_{t:T}$. The definition \eqref{langrange1}-\eqref{langrange3} show that the total number of parameters to optimize is $(M+1)(T-t)$ at each time $t$. The computational complexity of such optimization may not be feasible for real-time implementation. To handle this issue, we keep the Lagrange variables $\lambda_{m, t}=\lambda_m$ and $\mu_t=\mu$ for all $t=1,2,\ldots, T$. This reduces the number of optimization variables to $(M+1)$. After adopting time-invariant Lagrange variables, the primal problem \eqref{dual} becomes:
\begin{align}\label{dual1}
&\bar p_{\tau}(\lambda_1,\lambda_2, \ldots, \lambda_M, \mu) =\nonumber\\
&\;\inf_{\pi \in \Pi}\sum_{t=\tau}^{T}\sum_{m=1}^M  \sum_{j=1}^{k_m}  \frac{\gamma^{t-\tau} \mathbb{E}_{\pi}\left[  w_{m,j} p_{m, j}(\Delta_{m, j}(t))\right]}{K}\nonumber\\
&+ \sum_{m=1}^M \lambda_{m} \sum_{t=\tau}^{T}\frac{\gamma^{t-\tau}}{K}  \left(\left(\sum_{j=1}^{k_m} \pi_{m, j}(t)\right)-C_m\right)\nonumber\\
&+{\mu} \sum_{t=\tau}^{T} \frac{\gamma^{t-\tau}}{K}  \left(\left(\sum_{m=1}^M  \sum_{j=1}^{k_m}\pi_{m, j}(t)n_{m, j}\right)- N\right),
\end{align}
and the dual problem \eqref{dualProblem} becomes
\begin{align}\label{dualProblem1}
(\lambda_1^*,\lambda^*_2, \ldots, \lambda^*_M, \mu^*) = \argmax_{\lambda_1,\lambda_2, \ldots, \lambda_M, \mu\geq 0} \bar p(\boldsymbol{\lambda}(\tau), \boldsymbol{\mu}(\tau); \tau:T).
\end{align}

The dual sub-gradient ascent algorithm for solving \eqref{dualProblem1} is as follows:

\begin{align}
    &\lambda_{m}(i+1)=\max\bigg\{\lambda_{m}(i)+\nonumber\\
    &\frac{\beta_m}{Ki}\left(\sum_{t=\tau}^T\sum_{j=1}^{k_m}\gamma^{t-\tau}\pi_{m, j}^*(t)-\frac{(1-\gamma^{(T-t)})C_m}{(1-\gamma)}\right), 0\bigg\}, \\
     &\mu(i+1)=\max\bigg\{\mu(i)+ \nonumber\\
     &\frac{\beta}{Ki}\left(\sum_{t=\tau}^T\sum_{m=1}^M\sum_{j=1}^{k_m}\gamma^{t-\tau}\pi_{m, j}^*(t)n_{m,j}-\frac{(1-\gamma^{(T-t)})N}{(1-\gamma)}\right), 0\bigg\},
\end{align}
where $\beta, \beta_m>0$ are the step sizes and $\pi_{m, j}^*(t)$ is the optimal action to the $(m, j)$-th sub-problem with $\lambda_{m}(i)$ and $\mu(i)$ obtained in the $i$-th iteration.}

\section{Numerical Experiments}
\label{sec:simulation}

\begin{figure}[t]
  \centering
\includegraphics[width=0.4\textwidth]{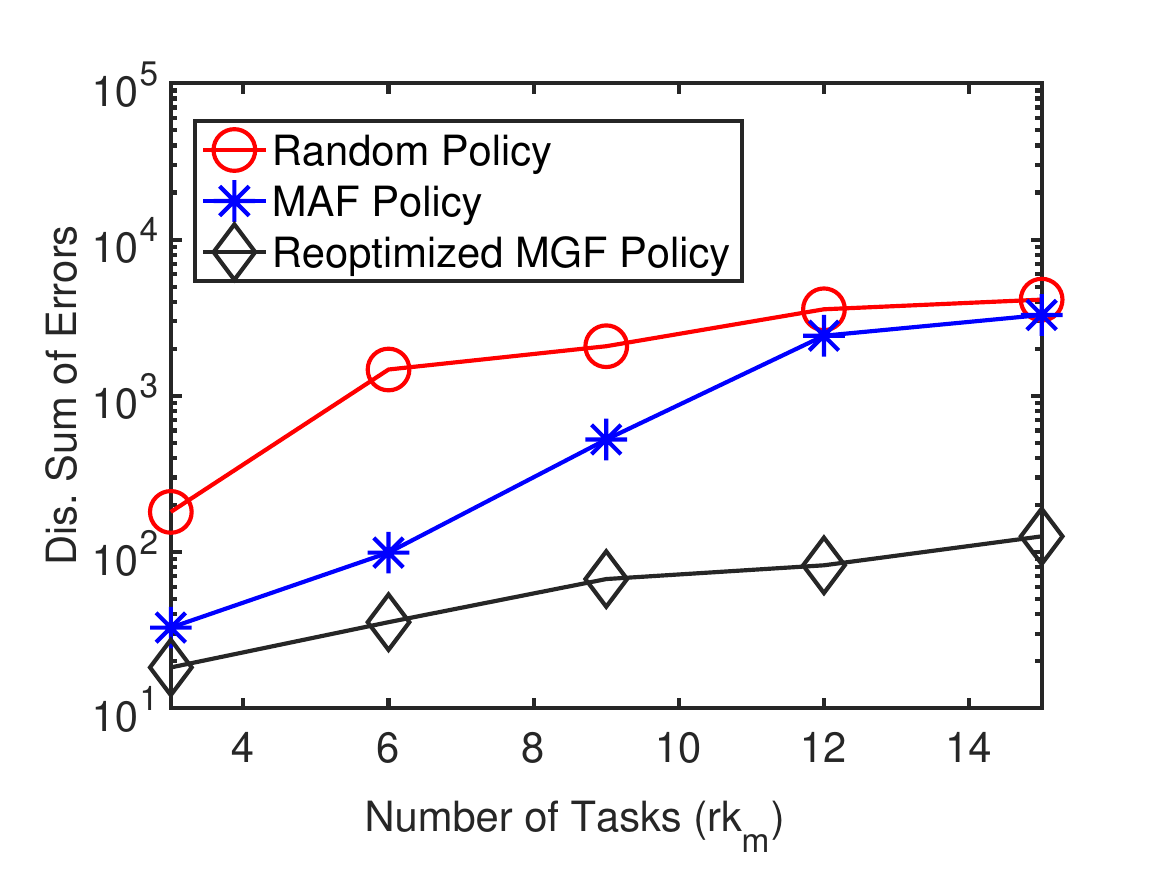}
  \caption{Dis. Sum of Errors vs. $rk_m$}\label{fig:synthetictask}
\end{figure}

\begin{figure}[t] \centering
\includegraphics[width=0.4\textwidth]{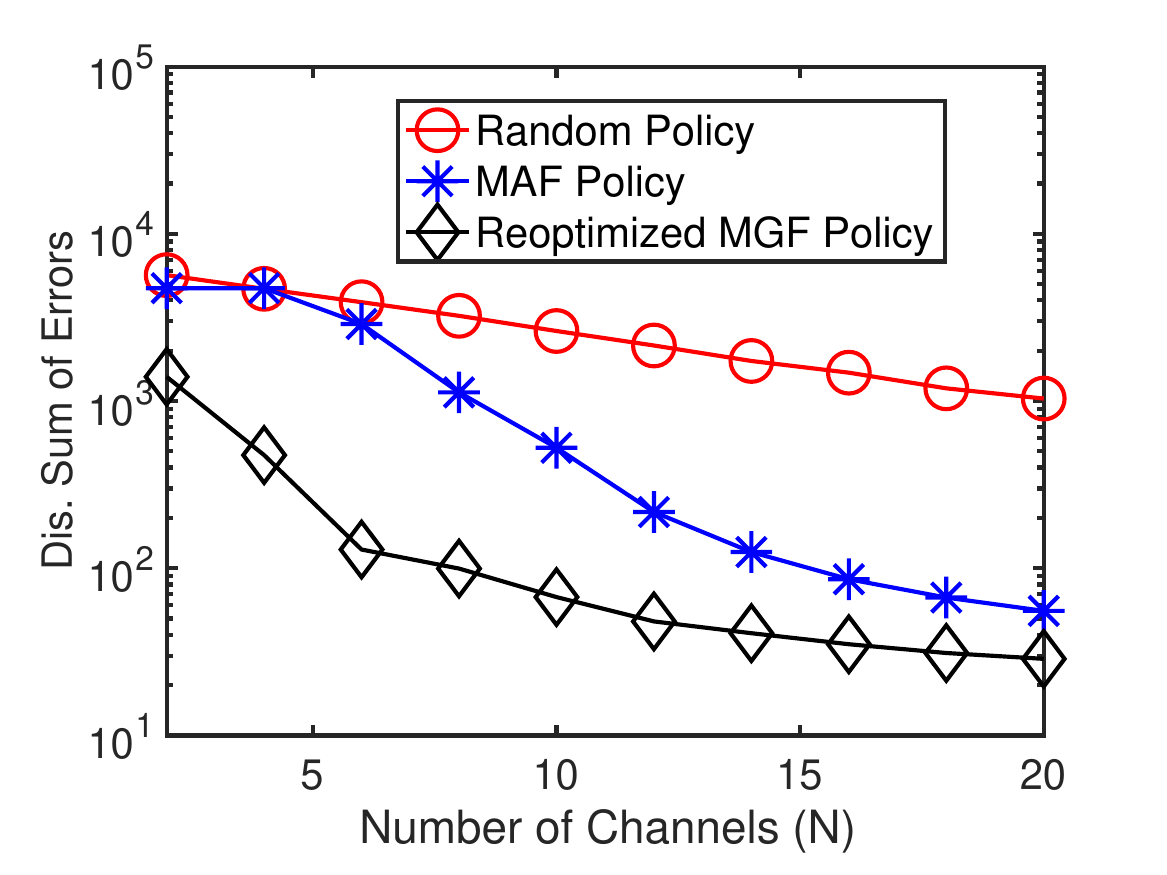}
\caption{Dis. Sum of Errors vs. $N$}\label{fig:syntheticchannel}
\end{figure}

\begin{figure}[t]\centering
\includegraphics[width=0.4\textwidth]{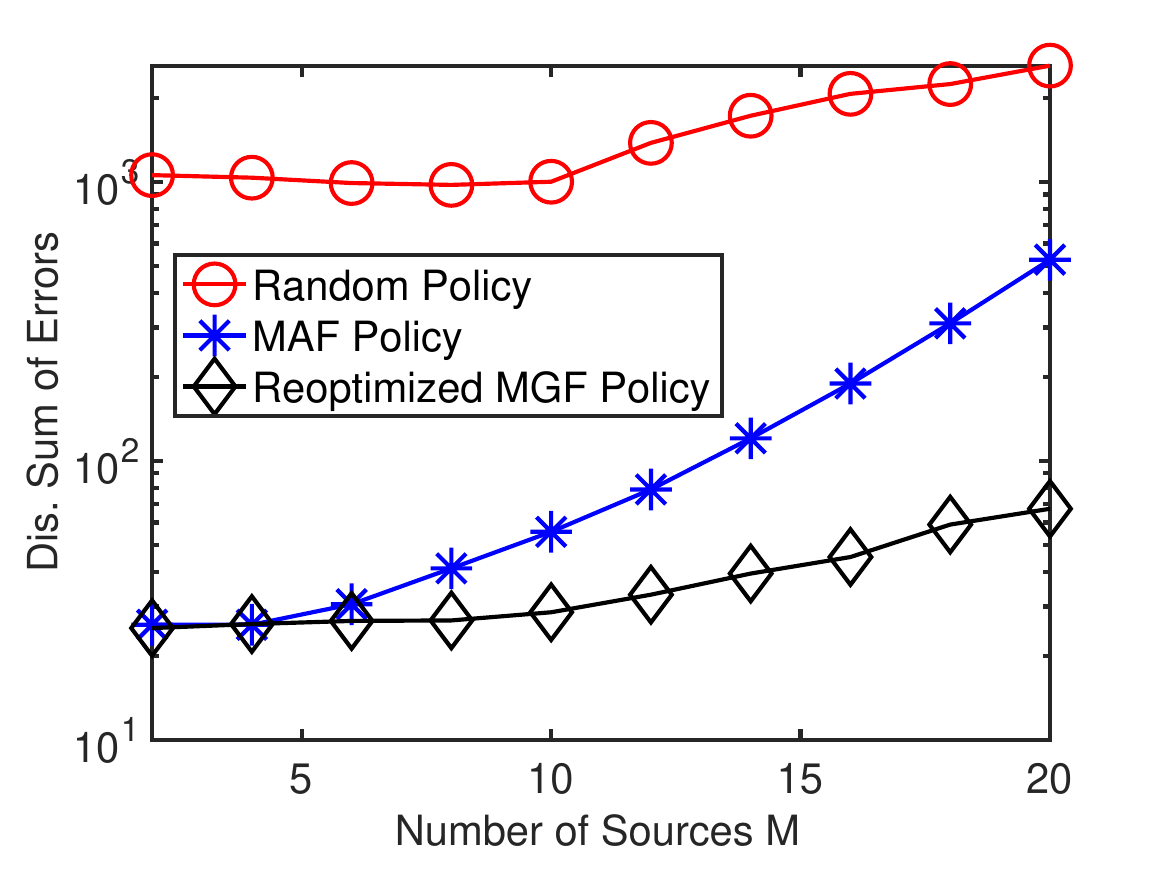}
\caption{Dis. Sum of Errors vs. $M$}\label{fig:syntheticsources}
\end{figure}

  

We consider the following three policies for evaluation:
\begin{itemize}[leftmargin=7mm]
\item \textit{Maximum Gain First (MGF) Policy:} {\violet For numerical study, we use the modified Algorithm \ref{alg:gain1}.} 

\item \textit{Maximum Age First (MAF) Policy:} At each time slot $t$, the MAF policy selects the inference task $(m, j)$ with the highest AoI from the set of all inference tasks $A_1(t)$ with non-zero AoI. If constraints permit, the policy generates and transmits the feature for the selected task. Then, $(m, j)$ is removed from $A_1(t)$. This process repeats until $A_1(t)$ is empty. AoI-based priority policies are commonly used as baselines in the literature \cite{zou2021minimizing, shishertimely, sun2023age}.

\item \textit{Random Policy:} At each time slot $t$, the random policy selects one inference task $(m, j)$ from the set of all tasks $A_2(t)$ following a uniform distribution. If constraints permit, the policy generates and transmits the feature for the selected task. The task $(m, j)$ is then removed from $A_2(t)$. This process repeats until $A_2(t)$ becomes empty.
\end{itemize}

We evaluate these three policies under three scenarios: 
\begin{itemize}[leftmargin=7mm]
    \item \textit{Synthetic evaluations:} We assess the policies assuming synthetic AoI penalty functions for all inference tasks (Sec. \ref{Synthetic}).
    \item \textit{Remote Robot Car Detection:} {\violet We conduct a
    remote robot car detection experiment. We generate the inference error function for robot car detection and incorporate the resulting inference error functions into the simulation} (Sec. \ref{robotcar})
    \item \textit{Traffic Prediction and Segmentation:} In this experiment, we consider two machine learning tasks: (i) scene segmentation and (ii) traffic prediction on the NGSIM dataset \cite{NGSIM_I80_2016, NGSIM_US101_2016, NGSIM_Lankershim_2016, NGSIM_Peachtree_2016}. Then, we incorporate the resulting inference error functions into the simulation (Sec. \ref{machinelearning}).
\end{itemize}

\subsection{Synthetic Evaluations}\label{Synthetic} 
In this section, we use three AoI penalty functions: $p_{m,j}(\delta) = \delta, \exp(0.5 \delta), 10 \text{log} (\delta)$. These functions are widely used in AoI literature as estimation error \cite{Tripathi2019, SunNonlinear2019}. Each function is assigned to one-third of the inference tasks in each source $m$.

Fig. \ref{fig:synthetictask} illustrates the discounted sum of inference errors versus the number of tasks per source ($rk_m$) over a time horizon of $T=100$. Referring to Definition~\ref{defAsymptotically optimal}, we set $k_m=3$ and vary $r$. The additional simulation parameters are $M=20$, $N=10$, $\gamma=0.9$, $n_{m,j}=1$ for all tasks $(m, j)$, $C_m=2$ for all sources $m$, and $w_{m,j} = 0.01$ for half of the tasks and $1$ for the other half. We see that, when $rk_m=15$, the total discounted penalty for the MAF policy is $26$x higher than that of the MGF policy, while the random policy incurs $32$x higher penalty. 
The performance of the MAF policy deteriorates more rapidly than the MGF policy as the number of tasks per source increases, aligning with our findings in Theorem \ref{theorem4}.

In Fig. \ref{fig:syntheticchannel}, we plot the discounted sum of errors against the number of channels $N$. Here, $k_m = 3$ and $r=3$ for each source $m$, and the rest of the parameters are the same as in Fig. \ref{fig:syntheticchannel}.
We see that increasing $N$ improves performance for all policies, but more rapidly for MGF. When $N=2$, the MAF policy incurs four times penalty of the MGF policy. This performance gap narrows as $N$ increases, but even with $N=20$, the MAF policy's inference error remains twice that of the MGF policy.

In Fig. \ref{fig:syntheticsources}, we plot the discounted sum of errors against the number of sources $M$, with other parameters the same as in Fig. \ref{fig:synthetictask}\&\ref{fig:syntheticchannel}. As the number of sources increases, we see that the performance gap between MAF and MGF policies widens. This shows that MGF is more effective as the number of sources competing for MTRI resources increases.

  \begin{figure}[t]
    \centering
\includegraphics[width=0.4\textwidth]{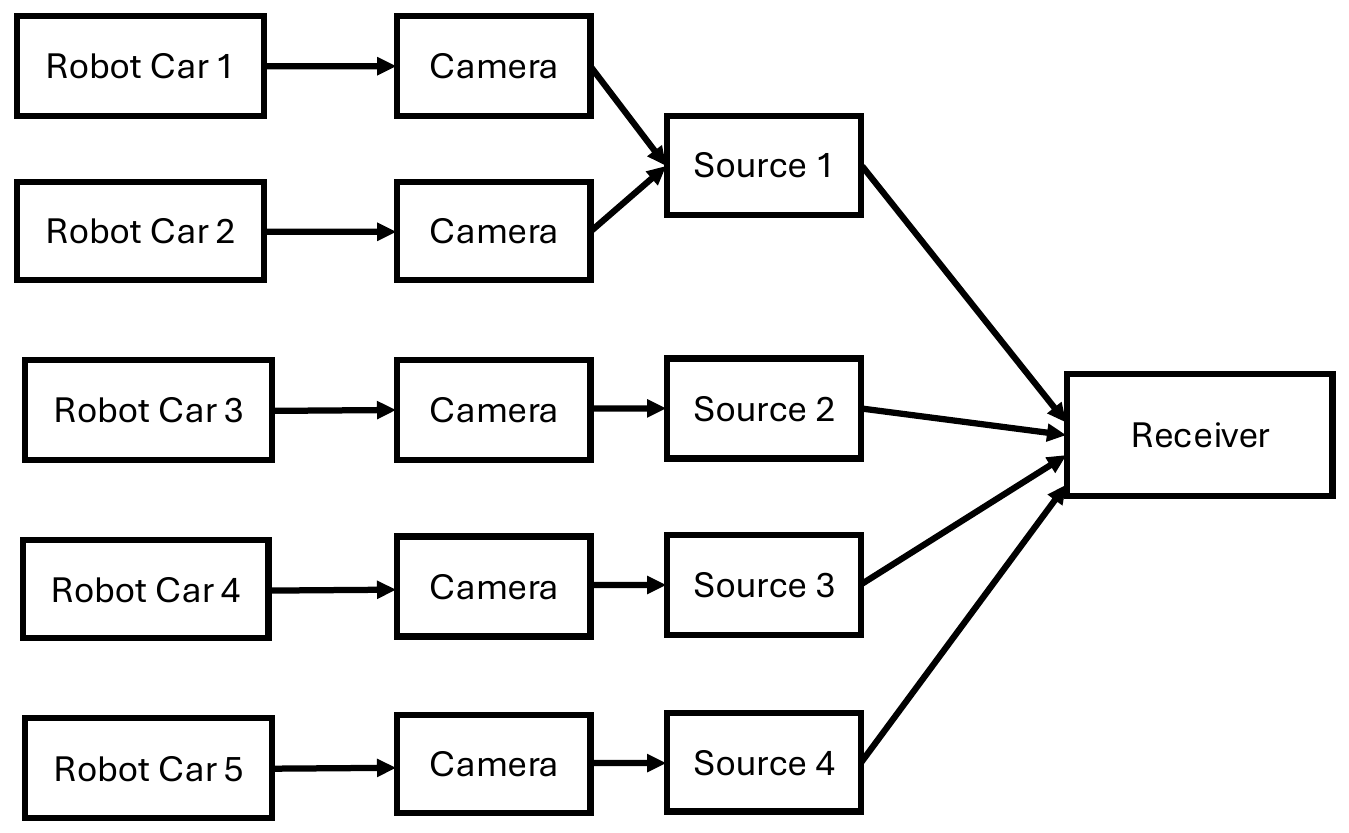}
  \caption{System Model of Experiment}\label{fig:robotmodel}
\end{figure}

\begin{figure}[t]
  \centering
\includegraphics[width=0.40\textwidth]{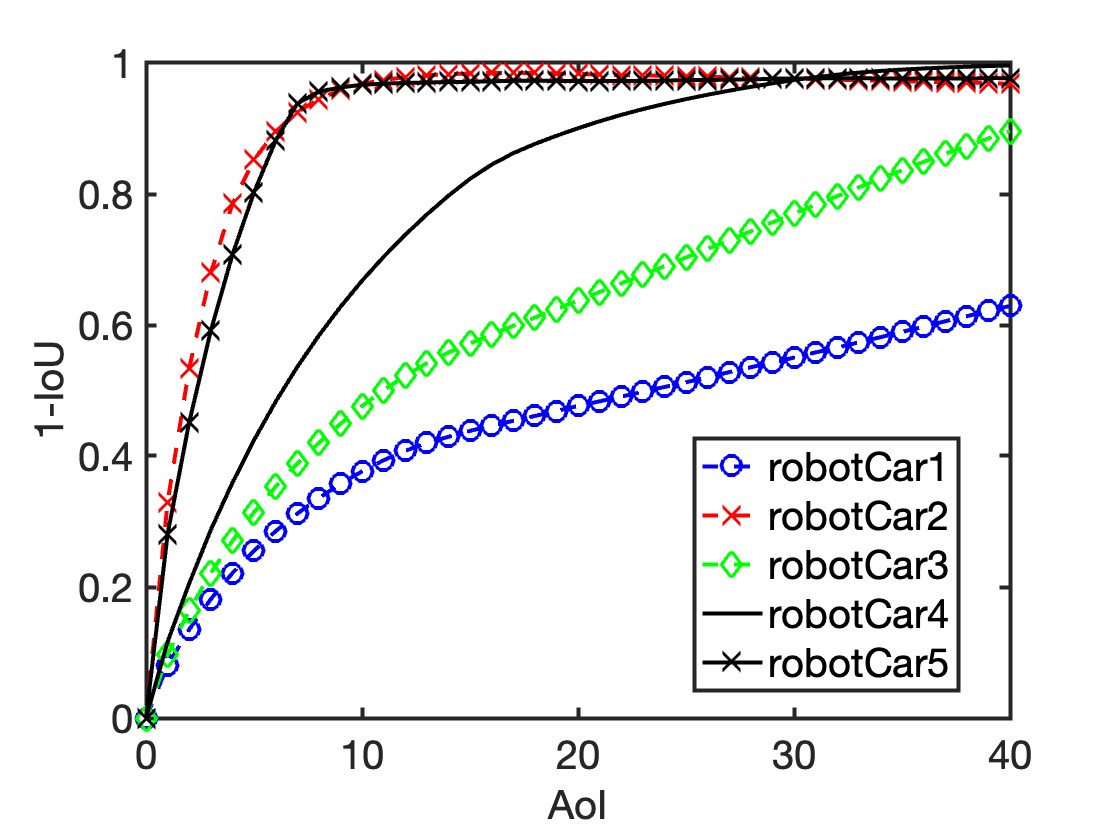}
\caption{Inference Error ($1-\mathrm{IoU}$) vs. AoI}\label{fig:roboterror}
\end{figure}

\begin{figure}[t]
  \centering
\includegraphics[width=0.40\textwidth]{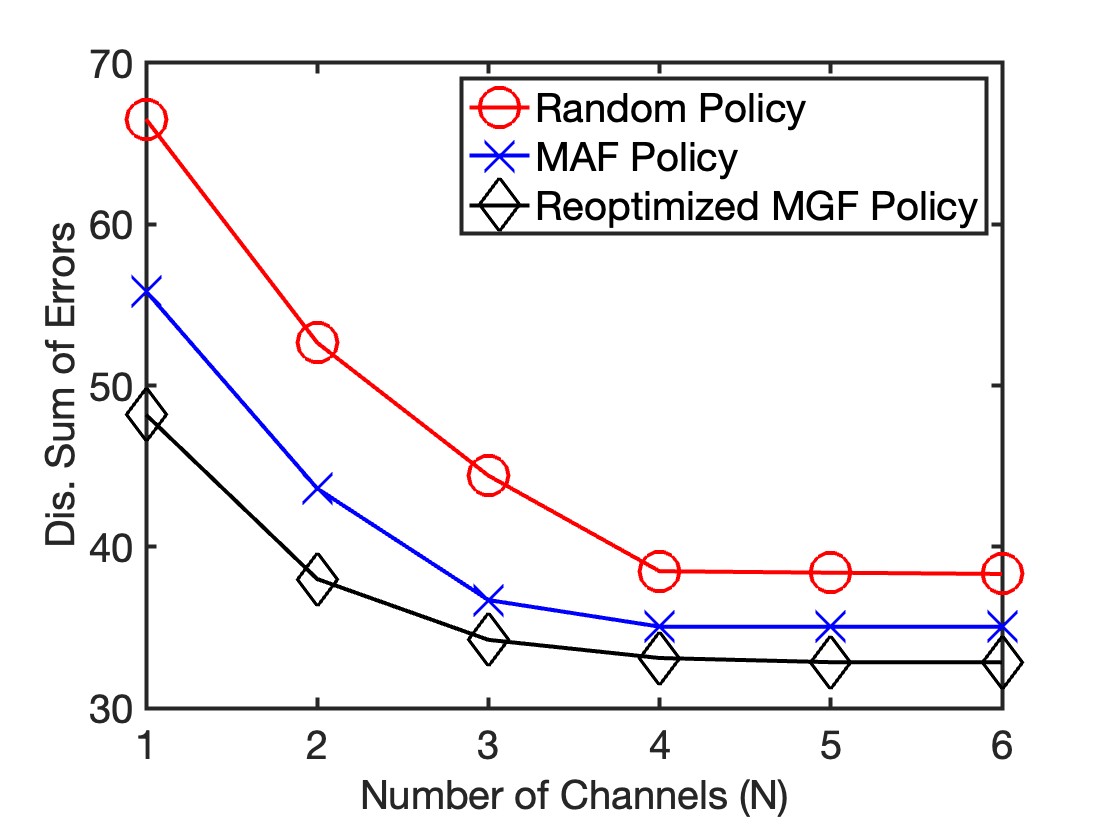}
\caption{Dis. Sum of Errors vs. $N$}\label{fig:robotchannels}
\end{figure}

\begin{figure}[t]
  \centering
\includegraphics[width=0.40\textwidth]{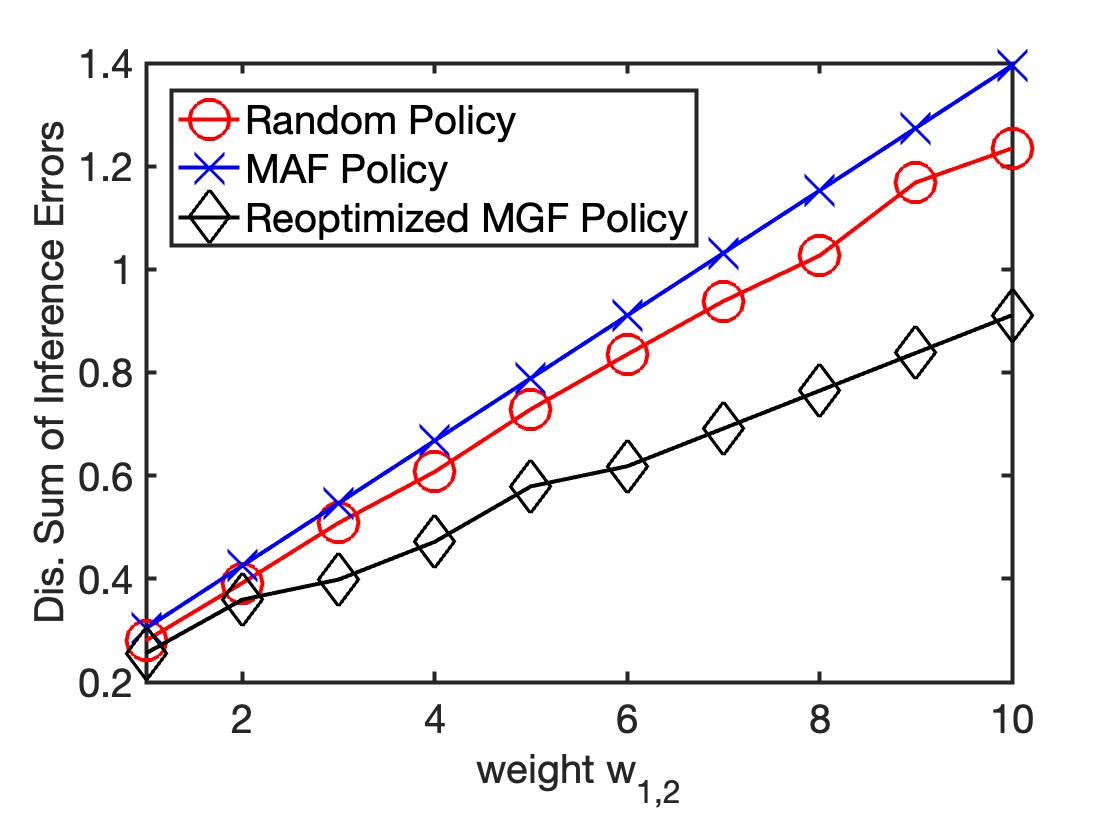}
\caption{Dis. Sum of Errors vs. $N$}\label{fig:robotweight}
\end{figure}

  

\begin{figure*}
    \centering
    \includegraphics[width=1\linewidth]{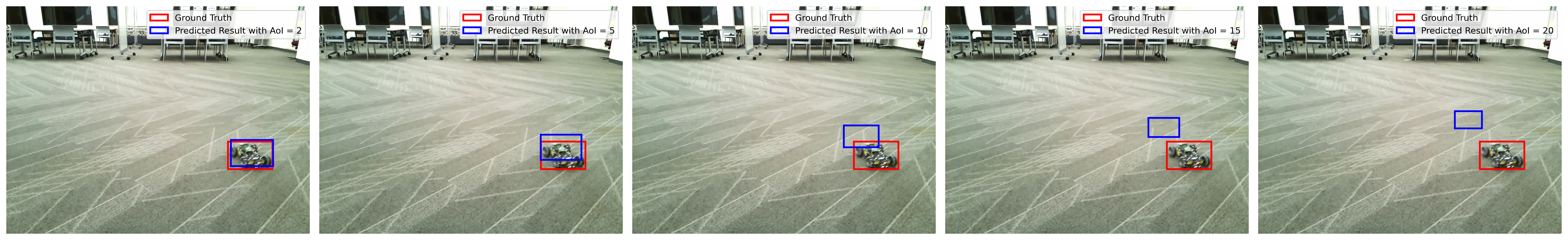}
    \caption{One sample of prediction results for robot car $3$ with different AoI values}
    \label{fig:robot3}
\end{figure*}

\begin{figure*}
    \centering
    \includegraphics[width=1\linewidth]{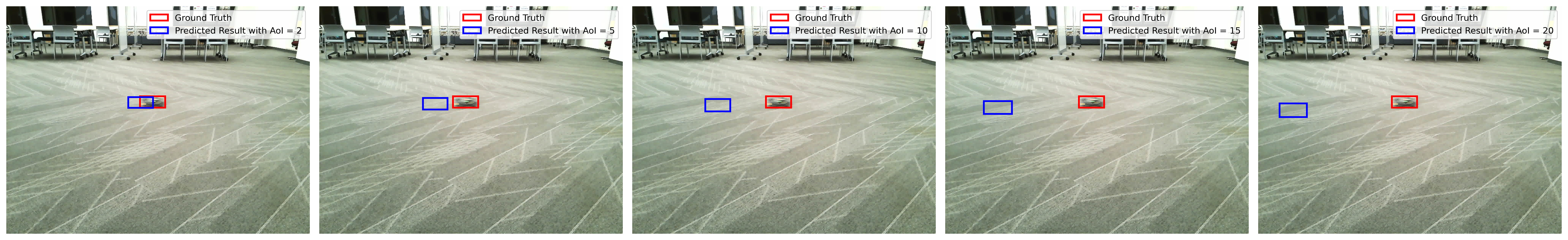}
    \caption{One sample of prediction results for robot car $2$ with different AoI values}
    \label{fig:robot2}
\end{figure*}

\subsection{Remote Robot Car Detection}\label{robotcar}

In this section, we discuss the system model and the results on remote robot car detection experiment. 

Fig. \ref{fig:robotmodel} illustrates the experimental system model, which consists of 4 sources and 5 robot cars. Source 1 observes 2 of the cars, while the remaining 3 sources each monitor a single car. Each source uses an onboard camera and the YOLO11x\cite{yolo11_ultralytics} model to detect robots within its view. The model generates bounding box information for each detection. In each time slot, the sources transmit the information of the detection to a receiver. However, the system is subject to two constraints: (i) A maximum of $N$ bounding boxes can be transmitted across the network and (ii) Source $1$ can only detect and transmit information for one of its two observed cars. The receiver's prediction for a car's location is the most recently delivered bounding box information for that car.

Fig. \ref{fig:roboterror} plots the inference error, defined as $1-\mathrm{IoU}$, versus the AoI for the $5$ robot cars. The Intersection over Union (IoU) measures the overlap between the predicted and ground-truth bounding boxes. To measure the inference error of a robot car, we have used $800$ data samples for each AoI value. One sample of prediction results for AoI values $=2, 5, 10, 15, 20$ are illustrated in Fig. \ref{fig:robot3} and Fig. \ref{fig:robot2}, respectively for robot car $3$ and $2$.  The error for robot cars 2 and 4 rapidly approaches its maximum value of 1, because these cars move much faster than the others. We can observe from Figs \ref{fig:robot3}\&\ref{fig:robot2} that robot car 2 is moving faster compared to robot car $3$ which yields larger inference error for robot car 2 compared to robot car 3.

Fig. \ref{fig:robotchannels} plots the discounted sum of the normalized inference error over $T=100$ time slots against the number of channels, $N$. For this simulation, we set the weight $w_{1,2}=5$, other weights are set to $1$, and discount factor $\gamma=0.9$. Our proposed ``Reoptimized MGF" policy clearly outperforms the ``MAF" and ``Random" baselines. The Random policy performs the worst as it does not use any state information. While the MAF policy considers the AoI, it neglects the system dynamics—specifically, how AoI impacts inference error. In contrast, our Reoptimized MGF policy leverages this system dynamics, leading to its superior performance.

Fig. \ref{fig:robotweight} plots the discounted sum of the normalized inference error over $T=100$ time slots against the weight, $w_{1,2}$ set for robot car $2$ observed by source $1$. For this simulation, other weights are set to $1$, and discount factor is set to $\gamma=0.1$. This plot illustrates that as the weight $w_{1,2}$ increases, the performance gap between our policy and other baselines increases. 

\subsection{Traffic Prediction and Segmentation}\label{machinelearning}

  \begin{figure}[t]
    \centering
\includegraphics[width=0.40\textwidth]{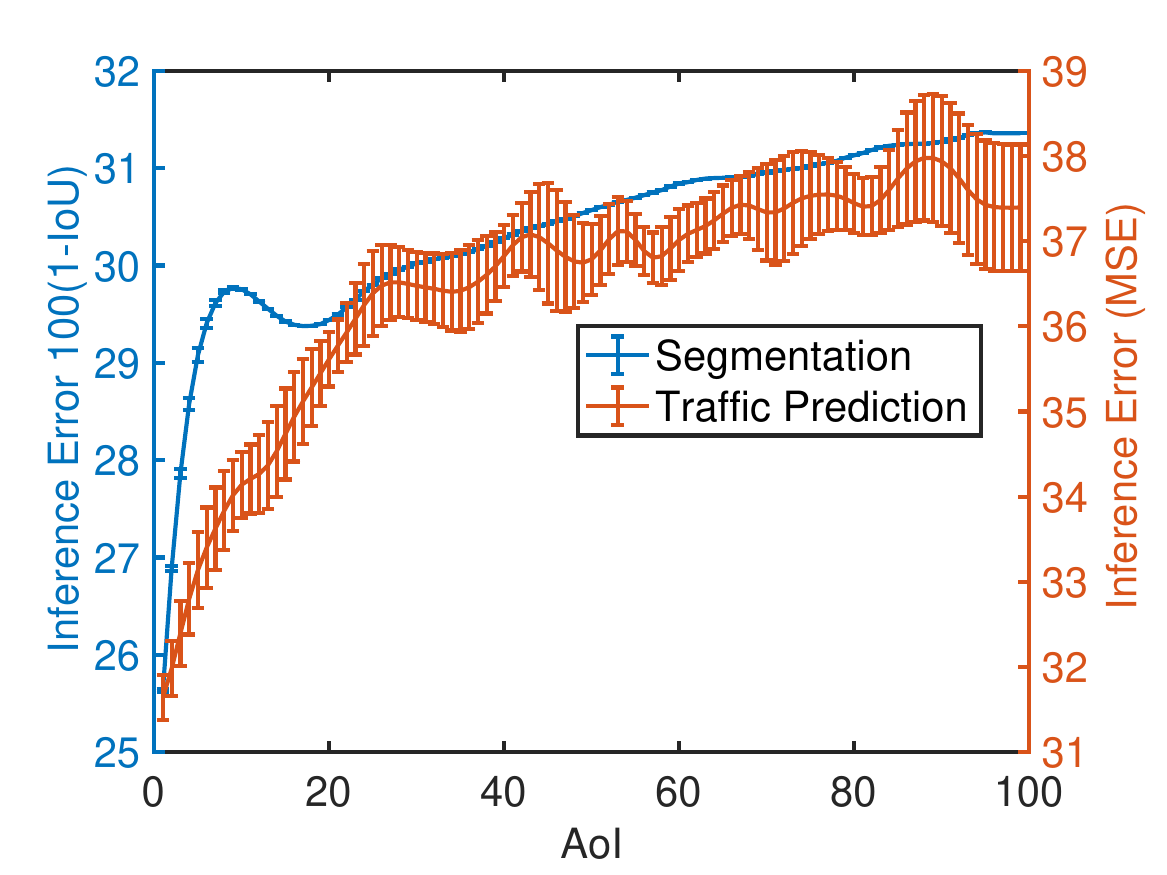}
  \caption{Inference Error vs AoI}\label{fig:traffic}
\end{figure}

\begin{figure}[t]
    \centering
\includegraphics[width=0.40\textwidth]{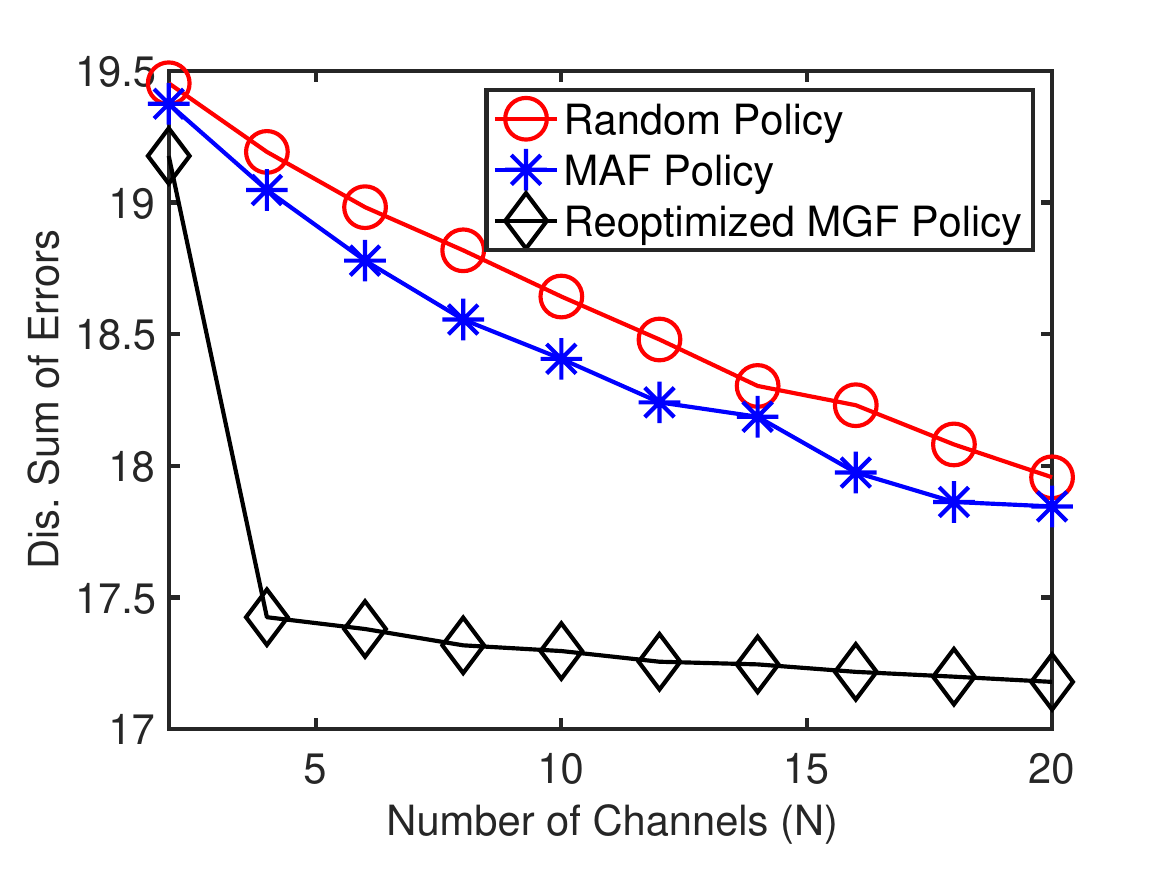}
\caption{Dis. Sum of Errors vs. $N$}\label{fig:trafficchannel}
\end{figure}

\begin{figure}[t]\centering
\includegraphics[width=0.40\textwidth]{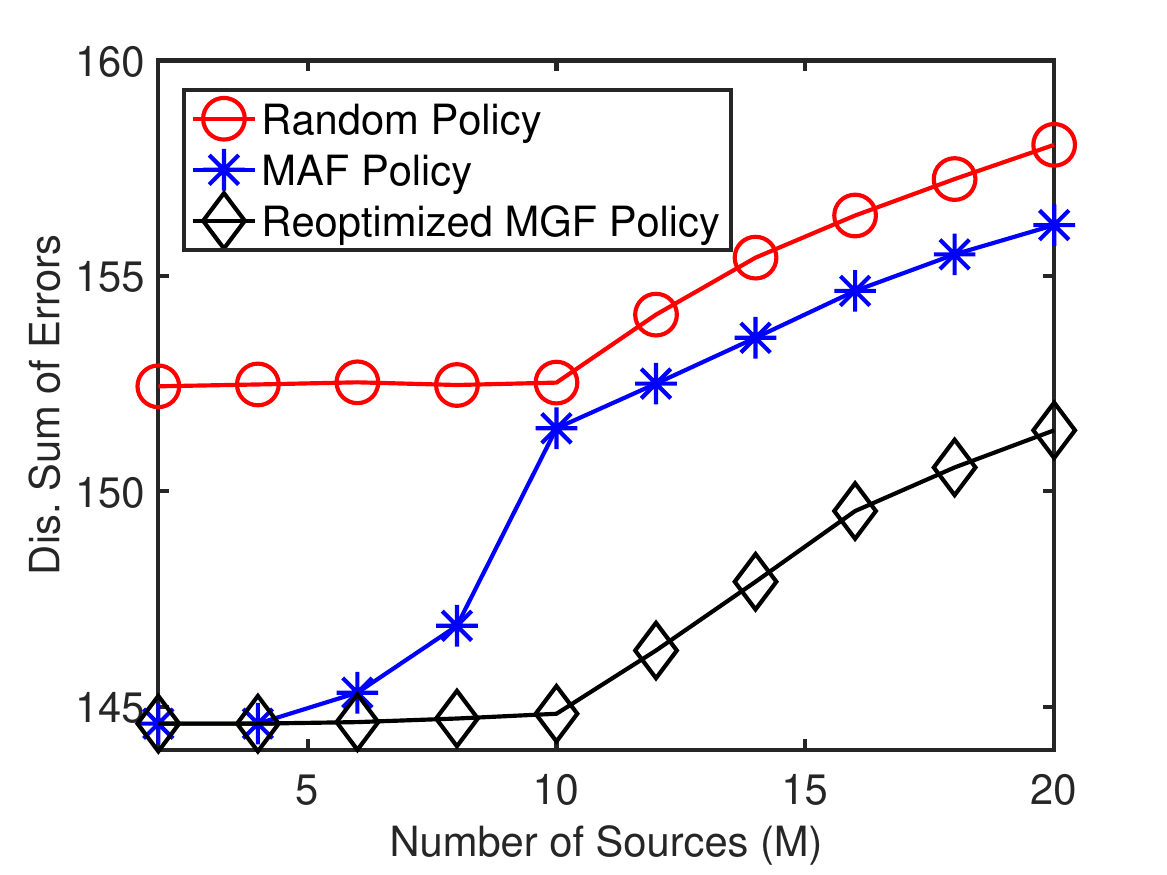}
\caption{Dis. Sum of Errors vs. $M$}\label{fig:trafficsources}
\end{figure}

  

We consider two machine learning tasks: (i) scene segmentation and (ii) traffic prediction. To collect the inference error functions, we employ the NGSIM dataset \cite{NGSIM_I80_2016, NGSIM_US101_2016, NGSIM_Lankershim_2016, NGSIM_Peachtree_2016}, 
which contains video recordings from roadside unit (RSU) cameras installed above four different US road surfaces. These videos capture traffic from various camera angles around the road surfaces and were recorded at different times of the day, each for a duration of 15 minutes.

In our experiments, each source is modeled as an RSU. Each RSU generates features for the two inference tasks: video frame segmentation and traffic prediction. We define a time slot as the duration of two video frames, during which feature generation and transmission are completed. We randomly select $4$ videos for our analysis.


{\bf Scene Segmentation:} For image segmentation, we utilize the Segment Anything Model (SAM) released by Meta AI \cite{kirillov2023segany}.
We adopt the medium ViT-L model as a pre-trained model to segment each frame into distinct areas. We split the ViT-L model into two parts: a feature generator and a predictor. 
The predictor model takes feature generated at time $t-\delta$ as input to predict the segmentation for frame at time $t$, where $\delta$ is the AoI value. By taking feature produced at time $t$, we generate ground truth for loss calculation. 
We employ the Intersection over Union 100(1-IoU) loss metric, where $\text{IoU}=\frac{\hat{S} \cap S}{\hat{S} \cup S}$, $S$ is the ground truth segmentation of the frame containing combined masks for all distinct segments, and $\hat{S}$ is the predicted segmented frame. We use the loss function over the selected videos from the dataset to generate inference error.



{\bf Traffic Prediction:}
For traffic prediction, 
we leverage image pre-processing techniques and pre-trained state-of-the-art (SOTA) models. Each frame is duplicated, with one copy undergoing SAM-based segmentation mask application and the other undergoing grayscaling, edge enhancement, resizing, and blurring. Both processed frames are then fed into a pre-trained YOLOv8 \cite{Jocher_Ultralytics_YOLO_2023} image detection model to identify all vehicles. The detected vehicles from both frames are combined, removing any overlaps, and their positions are saved, creating the final data sequences. Combined SAM-YOLOv8 model, along with pre-processing, serves as feature generator. 

After generating the data sequences, we split them into 80\% training and 20\% inference datasets. Our prediction framework utilizes a separate LSTM model for each AoI value $\delta$, with hyperparameters detailed in Table \ref{tab}. For a given AoI $\delta$, the input to the corresponding LSTM model is the sequence of vehicle counts from $\delta - l$ to $\delta$ frames ago, with the goal of predicting the current vehicle count. We train each model for $50$ epochs. 
Using the trained LSTM models, we record inference errors for each AoI $\delta =1, 2, \dots, 100$.


\begin{table}[htbp]
\centering
\begin{tabular}{lr}
\toprule
\textbf{Hyperparameter} & \textbf{Value} \\
\midrule
Hidden units per layer & 16 \\
Input and output dimensions & 1 \\
Batch size & 32 \\
Window size ($l$) & 3 \\
Optimizer & Adam \\
Learning rate & 0.0001 \\
\bottomrule
\end{tabular}
\caption{Hyperparameters used for the LSTM models.}
\label{tab}
\end{table}


Fig. \ref{fig:traffic} illustrates the resulting inference errors vs. AoI.

We now evaluate the scheduling policies employing these inference error functions. In Fig. \ref{fig:trafficchannel}, we plot the discounted sum of errors against the number of channels $N$ over a time horizon of $T=100$, with two inference tasks $k_m=2$ per source and scaling factor $r=1$. We set $M=20$, $\gamma=0.9$, $n_{m,j}=1$, and $C_m=2$ for all sources. Task weights $w_{m,j}$ are set to 1 for tasks (1,2) and (5,1), and 0.01 for the rest. As expected, increasing $N$ is seen to improve performance across all policies. Notably, when $N=4$, the MGF policy outperforms the MAF policy by $10\%$. Additionally, Fig. \ref{fig:trafficchannel}) clearly demonstrates the consistently poor performance of the random policy.

Fig. \ref{fig:trafficsources} illustrates the performance of the scheduling policies as the number of sources $M$ increases, over a finite horizon of $T=100$. Each source has two inference tasks $k_m=2$ and $r=1$, with other simulation parameters set to $N=10$, $\gamma=0.9$, $n_{m,j}=1$, and $C_m=1$. We assign weights $w_{m,j}$ of $1$ to half the inference tasks and $0.01$ to the rest. We see that while the MAF and MGF policies perform similarly with a small number of sources, the MGF policy becomes better as the number of sources increases.

\section{Conclusion}

In this paper, we studied the computation and communication co-scheduling problem in MTRI systems to minimize inference errors under resource constraints. We formulated this problem as a weakly-coupled MDP with inference errors described as penalty functions of AoI. To address the resulting PSPACE-hard complexity, we developed a novel reoptimized MGF policy, which our theoretical analysis proved to be asymptotically optimal as the number of inference tasks increases. We also discussed how to simplify a reoptimized policy by reducing the number of optimization variables. 
Numerical evaluations using both synthetic and real-world datasets further validated our reoptimized MGF's superior performance compared to baseline policies. For synthetic evaluations, we use three different types of inference error functions that are widely used in AoI literature. For real-world experiments, remote robot car detection and  vehicular inference tasks are studied.  

\begin{appendices}



\section{Proof of Lemma \ref{lemmatheorem4}}\label{plemmatheorem4}

Firstly, we denote the action at time $t$ for $j$-th inference task of $m$-th source under the reoptimized MGF policy provided in Algorithm \ref{alg:gain} by $\pi^{\text{MGF}}_{m, j}(t)$. We denote the reoptimized action at time $t$ for $j$-th inference task of $m$-th source under the policy $\pi^*$ by $\pi^*_{m, j}(t)$, where $\pi^*$ maximizes $\bar p_t(\boldsymbol{\lambda}_{t:T}, \boldsymbol{\mu}_{t:T})$ defined in \eqref{dualProblem}.

Next, we denote the number of subproblems with actions that differ between the reoptimized MGF policy provided and the policy $\pi^*$ at time $t$ by $I_t$:
$$I_t=\bigg\{(m, j): \pi^{\text{MGF}}_{m, j}(t) \neq \pi^*_{m, j}(t)\bigg\},$$
\begin{align}
    C^*_m(t)=\sum_{j=1}^{k_m} \pi^*_{m,j}(t),
    N^*(t)=\sum_{m=1}^M\sum_{j=1}^{k_m} \pi^*_{m,j}(t),
\end{align}
where we use $n_{m, j}=1$ for the simplicity of analysis in this proof.

{\bf Case 1}: At time $t$, all constraints are satisfied under policy $\pi^*$. 
In this case, we have $|I_t|=0.$
{\bf Case 2}: At least one constraint does not satisfy under policy $\pi^*$. 
{\blue In this case, if a sub-problem $(m, j) \in I_t$, then $\pi^*_{m, j}(t)=1$ and $\pi^{\text{MGF}}_{m, j}(t)=0$ due to resource limitation, i.e., $C^*_m(t)>C_m$ or $N^*(t)>N$ or both. Because the active action $\pi^*_{m, j}(t)=1$ consumes one communication and one computation resource, we can upper bound $I_t$ by
\begin{align}\label{I2case}
    |I_t| \leq \sum_{m=1}^M (C^*_m(t)-C_m)^{+}+(N^*(t)-N)^{+}.
\end{align}}

By taking average over all possible AoI values, we have 
\begin{align}\label{C_m}
    &\mathbb E[(C^*_m(t)-C_m)^{+}]^2\nonumber\\
    &\overset{(a)}{\leq} \mathbb E[(C^*_m(t)-\mathbb E[ C^*_m(t)])^{+}]^2 \nonumber\\
    &\overset{(b)}{\leq} \text{Var}(C^*_m(t))\nonumber\\
    &\overset{(c)}{\leq} \sum_{j=1}^{k_m} \bigg(E[\pi^*_{m,j}(t)]-\mathbb E[\pi^*_{m,j}(t)]^2\bigg)\nonumber\\
    &{\leq} k_m, 
\end{align}
where (a) holds because on average $\mathbb E[C^*_m(t)]\leq C_m$, see \cite[Proposition 3.2(c)]{brown2023fluid}, (b) holds due to Jensen's inequality, (c) is because of Bhatia-Davis inequality.
Similarly, we can have 
\begin{align}\label{N_m}
    &\mathbb E[(N^*(t)-N)^{+}]^2 \nonumber\\
    &\leq \mathbb E[(N^*(t)-\mathbb E[N^*])^{+}]^2\nonumber\\
    &\leq \text{Var}(N^*(t))\nonumber\\
    &\leq \sum_{m=1}^M k_m.
\end{align}
By taking an average on \eqref{I2case} and substituting \eqref{C_m} and \eqref{N_m} into \eqref{I2case}, we obtain
\begin{align}\label{Ilast}
    \mathbb E[|I_t|]&=\sum_{m=1}^M \mathbb E[(C^*_m(t)-C_m)^{+}]+\mathbb E[(N^*(t)-N)^{+}]\nonumber\\
    &\leq \sum_{m=1}^M \sqrt{k_m}+\sqrt{\sum_{m=1}^M k_m}.
\end{align}
This concludes the proof. 

\section{Proof of Theorem \ref{theorem4}}\label{ptheorem4}
To prove this theorem, we begin with a definition of re-optimized fluid (ROF) policy \cite{brown2023fluid}. Leveraging Propositions 3.2 and 3.4 of \cite{brown2023fluid}, which establish the equivalence of optimal actions under dynamic fluid and Lagrangian relaxations, we define the re-optimized fluid (ROF) policy: 

\begin{definition}[\textbf{Re-optimized Fluid Policy}\cite{brown2023fluid}]\label{ROFP} 
Any reoptimized feasible fluid policy $\pi$ up to a finite time $T$ satisfies:
\begin{itemize}[leftmargin=4mm]
    \item At every time $t$, the policy updates $\Delta_{m,j}(t)$ and generates an action $\pi^*_{m,j}(t)$ independently across all sub-problems that is optimal to \eqref{dual} with optimal Lagrange multipliers. 
    \item Assigns $\pi_{m,j}(t)=0$ for all $(m, j)$. Then, in any pre-defined order among all sub-problems $(m, j)$, update action $\pi_{m,j}(t)=\pi^*_{m,j}(t)$ if all constraints are satisfied. {\blue In this paper, we employ maximum gain index first strategy for ordering the sub-problems.}    
\end{itemize}
\end{definition}

Algorithm \ref{alg:gain} and Definition \ref{ROFP} implies that MGF belongs to ROF policies. The ROF policies are proven to be asymptotically optimal \cite{brown2023fluid}. We prove Theorem \ref{theorem4} for our problem with tighter bound than that established in \cite{brown2023fluid}. Firstly, we omit $r$ for the simplicity of presentation. We use $n_{m, j}=1$ for the simplicity of analysis.  

Because $p_{m, j}(\delta)$ is bounded, there exist finite constants $\bar{p}_h$ and $\bar{p}_l$ such that 
$\bar{p}_l \leq w_{m, j} p_{m, j}(\delta) \leq \bar{p}_h$. Let $\bar{p}_{opt}(T)$ and $\bar{p}_{{\text{MGF}}}(T)$ denote the discounted sum of inference errors under an optimal policy to \eqref{Multi-scheduling_problem}-\eqref{Sceduling_constraint2} and the MGF policy, respectively, {\blue truncated at} time $T$. Then, we have
\begin{align}\label{main}
&\bar{p}_{{\text{MGF}}} - \bar{p}_{opt}\nonumber\\
&{\leq} ~\bar{p}_{{\text{MGF}}}(T) - \bar{p}_{1}(\boldsymbol{\lambda}^*_{1:T}, \boldsymbol{\mu}^*_{1:T}) + \frac{\gamma^{T+1}(\bar{p}_h - \bar{p}_l)}{1-\gamma},
\end{align}
where the inequality holds because the penalty functions are bounded and the weak duality
$\bar{p}_{1}(\boldsymbol{\lambda}^*_{1:T}, \boldsymbol{\mu}^*_{1:T}) \leq \bar{p}_{opt}(T)$.




{\blue Let $B_t$ denote the expected number of inference tasks $(m, j)$ with different actions under the MGF policy and the policy $\pi^*$. We have 
\begin{align}\label{eqnmax}
    B_t &\overset{a}{\leq} \sum_{m=1}^M \sqrt{k_m}+\sqrt{\sum_{m=1}^M k_m} \overset{b}{\leq} 2\sum_{m=1}^M \sqrt{k_m}, 
\end{align}
where $(a)$ holds due to Lemma \ref{lemmatheorem4} and (b) holds  because $\|\mathbf x\|_2 \leq \| \mathbf x\|_1$ for the vector $\mathbf x=[\sqrt k_1, \sqrt k_2, \ldots, \sqrt k_m]$.

Similar to \cite[corollary 4.4]{brown2023fluid}, we can show the following Lemma:
\begin{lemma}\label{lemma3}
For our re-optimized fluid policy, we have  
\begin{align}
    \bar{p}_{{\text{MGF}}}(T) - \bar{p}_{1}(\boldsymbol{\lambda}^*_{1:T}, \boldsymbol{\mu}^*_{1:T})\leq \frac{\gamma (\bar p_h-\bar p_l) max_{t} B_t}{(1-\gamma)^3 \sum_{m=1}^Mk_m}.
    \end{align}
\end{lemma}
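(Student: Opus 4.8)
\textbf{Proof proposal for Lemma \ref{lemma3}.}

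The plan is to bound the performance gap between the MGF policy and the Lagrangian relaxation value $\bar p_1(\boldsymbol\lambda^*_{1:T},\boldsymbol\mu^*_{1:T})$ by carefully accounting, at each time step, for the cost incurred by the subproblems on which MGF and $\pi^*$ disagree, and then propagating this discrepancy forward through the AoI dynamics. The key structural fact I would exploit is that, by Lemma \ref{lemmatheorem4} (and its consequence \eqref{eqnmax}), the \emph{expected} number of disagreeing subproblems at any time $t$ is at most $B_t$, and $\max_t B_t$ is small relative to $K=\sum_{m=1}^M k_m$. First I would write, for a fixed sample path of AoI states, the difference between the discounted cost-to-go of MGF and that of $\pi^*$ as a telescoping sum over the time horizon: at step $t$, the two policies apply the same action on all but the $I_t$ subproblems in the disagreement set, so the immediate cost difference is bounded by $(\bar p_h - \bar p_l)$ per disagreeing subproblem, i.e. by $(\bar p_h-\bar p_l)|I_t|$ after normalizing by $K$; then the future AoI trajectories of those disagreeing subproblems may diverge, but each such subproblem can contribute at most $\sum_{s\ge t}\gamma^{s} (\bar p_h-\bar p_l)/K = \gamma^t(\bar p_h-\bar p_l)/((1-\gamma)K)$ to all later discounted costs.

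The main technical step is to set up a coupling (as in \cite[Corollary 4.4]{brown2023fluid}) between the MGF-controlled chain and the $\pi^*$-controlled chain so that a subproblem, once it agrees at some time, stays coupled until it next enters a disagreement set; under this coupling the total discounted excess cost attributable to disagreements is at most
\[
\frac{1}{K}\sum_{t=1}^{T}\gamma^{t-1}\,\mathbb{E}[|I_t|]\cdot\frac{\bar p_h-\bar p_l}{1-\gamma}.
\]
Using $\mathbb{E}[|I_t|]\le B_t\le \max_t B_t$ and $\sum_{t=1}^{T}\gamma^{t-1}\le \frac{1}{1-\gamma}$, and then absorbing one more geometric factor coming from the fact that each disagreement itself can recur over time (the standard argument that a single "mismatch event" propagates with a $1/(1-\gamma)$ multiplier for the re-activation and another $1/(1-\gamma)$ for the cost tail), one arrives at the claimed bound
\[
\bar p_{\text{MGF}}(T)-\bar p_1(\boldsymbol\lambda^*_{1:T},\boldsymbol\mu^*_{1:T})
\le \frac{\gamma\,(\bar p_h-\bar p_l)\,\max_t B_t}{(1-\gamma)^3\,\sum_{m=1}^M k_m}.
\]
I would cite the monotonicity/optimality structure from Lemma \ref{lemma1} to ensure that replacing $\pi^*$ by MGF on the non-disagreeing subproblems is cost-neutral, and the concavity/weak-duality facts from Sec.~\ref{sec:relaxation} to identify $\bar p_1(\boldsymbol\lambda^*_{1:T},\boldsymbol\mu^*_{1:T})$ with the value achieved by $\pi^*$ on the relaxed problem.

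The hard part will be making the coupling argument rigorous while keeping track of the three powers of $(1-\gamma)^{-1}$: one from the discounted sum over the horizon, one from the cost tail of each disagreement, and one from the expected number of times a given subproblem can re-enter the disagreement set (equivalently, the expected recurrence length of a mismatch). Getting the constant exactly right — and confirming that the normalization by $K=\sum_m k_m$ rather than by $\max_t B_t$ is what yields the improvement over \cite{brown2023fluid} — is where the delicate bookkeeping lies; I would handle this by adapting the amortized-accounting scheme of \cite[Corollary 4.4]{brown2023fluid} almost verbatim, substituting our sharper bound $\max_t B_t \le 2\sum_{m=1}^M\sqrt{k_m}$ only at the very end when converting Lemma \ref{lemma3} into the statement of Theorem \ref{theorem4}.
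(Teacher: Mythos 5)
Your proposal takes essentially the same route as the paper: the paper's entire proof of Lemma~\ref{lemma3} is a one-line appeal to \cite[Corollary 4.4]{brown2023fluid} with the assertion that ``similar proof steps'' carry over, which is exactly your stated fallback of adapting that corollary's amortized-accounting scheme almost verbatim. Your sketch of the telescoping/coupling mechanics and of where the three $(1-\gamma)^{-1}$ factors arise is in fact more explicit than anything the paper provides, so the two arguments coincide in substance, both ultimately resting on the cited reference.
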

By using similar proof steps provided by \cite{brown2023fluid}, we can prove Lemma \ref{lemma3}. By combining \eqref{eqnmax} and Lemma \ref{lemma3}, we can establish 
\begin{align}\label{anotherbound}
    \bar{p}_{{\text{MGF}}}(T) -\bar{p}_{1}(\boldsymbol{\lambda}^*_{1:T}, \boldsymbol{\mu}^*_{1:T})&\leq  \frac{2(\bar p_h-\bar p_l)\gamma \sum_{m=1}^M \sqrt{k_m}}{(1-\gamma)^3 \sum_{m=1}^M k_m}\nonumber\\
    &\leq  \frac{2M(\bar p_h-\bar p_l)\gamma \sum_{m=1}^M \sqrt{k_m}}{(1-\gamma)^3 (\sum_{m=1}^M \sqrt{k_m})^2}\nonumber\\
    &\leq  \frac{2M(\bar p_h-\bar p_l)\gamma}{(1-\gamma)^3 \sum_{m=1}^M \sqrt{k_m}},
\end{align}
where the second inequality holds due to $\frac{1}{\|\mathbf x\|_2^2}\leq \frac{M}{\|\mathbf x\|_1^2}$.

By substituting \eqref{anotherbound} and $T = \log_{\frac{1}{\gamma}}\sum_{m=1}^M \sqrt{k_m}$ into \eqref{main}, we obtain 
\begin{align}
&\bar{p}_{{\text{MGF}}} - \bar{p}_{opt}\leq \frac{2M(\bar p_h-\bar p_l)\gamma}{(1-\gamma)^3 \sum_{m=1}^M \sqrt{k_m}} + \frac{(\bar p_h-\bar p_l)\gamma}{(1-\gamma)\sum_{m=1}^M \sqrt{k_m}}\nonumber\\
&\leq \frac{1}{\sum_{m=1}^M \sqrt{k_m}} 
\left(\frac{2M(\bar p_h-\bar p_l)\gamma}{(1-\gamma)^3} + \frac{(\bar p_h-\bar p_l)\gamma}{(1-\gamma)}\right).
\end{align}}
By substituting $k_m=rk_m$, $C_m=rC_m$, $N=rN$, and maintaining $M$ sources and $k_m$ class of sub-problems constant, we arrive at Theorem \ref{theorem4}. Note that changing $M$ sources and $k_m$ class of sub-problems would alter the optimal Lagrange multipliers. 
This concludes the proof.






\end{appendices}

\bibliographystyle{IEEEtran}
\bibliography{refshisher}

\end{document}